\documentclass[11pt]{article}
\usepackage[utf8]{inputenc}
\usepackage{xspace}
\usepackage{setspace}
\usepackage[margin=1in,a4paper]{geometry}
\usepackage{graphicx}
\graphicspath{ {./figures/} }
\usepackage{subcaption}
\usepackage{amsmath,amsfonts,amssymb}
\usepackage{xcolor}
\usepackage[colorlinks=true,
            citecolor=blue]{hyperref}
\usepackage{fullpage}
\usepackage{amsthm}
\usepackage{algorithm}
\usepackage{algpseudocode}
\usepackage{bbm}
\usepackage{mathtools}
\usepackage{float}
\usepackage{indentfirst}
\usepackage{tikz}
\usepackage{accents}
\usetikzlibrary{arrows.meta}
\usetikzlibrary{shapes.misc}
\tikzset{cross/.style={cross out, draw=black, minimum size=2*(#1-\pgflinewidth), inner sep=0pt, outer sep=0pt},
cross/.default={1pt}}
\allowdisplaybreaks
\newtheorem{theorem}{Theorem}[section]
\newtheorem*{theorem*}{Theorem}
\newtheorem{corollary}{Corollary}[section]
\newtheorem{lemma}{Lemma}[section]

\newtheorem*{prop*}{Proposition}

\theoremstyle{definition}
\newtheorem{definition}{Definition}[section]
\newtheorem*{definition*}{Definition}

\theoremstyle{remark}

\numberwithin{equation}{section}
\usepackage[capitalize]{cleveref}
\def\submission{0}

\newcommand{\poly}{\operatorname{poly}}
\newcommand{\sgn}{\operatorname{sgn}}

\newcommand{\bin}{\operatorname{Bin}}
\newcommand{\fdim}{\widehat{\dim}}
\newcommand{\supp}{\operatorname{supp}}
\newcommand{\wt}{\operatorname{wt}}

\newcommand{\vspan}{\operatorname{span}}

\newcommand{\dist}{\operatorname{dist}}
\newcommand{\etal}{\emph{et al.\@}}

\usepackage[style=alphabetic,
sorting=none]{biblatex}
\providecommand{\email}[1]{\href{mailto:#1}{\nolinkurl{#1}\xspace}}

\title{Exponential Quantum Advantage in Testing Fourier Dimensionality}

\ifnum\submission=0 % Alphabetical order
\author{Kenny Chen\thanks{The University of Sydney, Email: \email{kche5493@uni.sydney.edu.au}}}
\else 
\author{Author(s) redacted}
\fi

\date{September 2026}

\begin{document}
\maketitle
\begin{abstract}
    A boolean function $f$ has Fourier dimension $k$ if its nonzero Fourier coefficients span a subspace of dimension $k$. We consider the property testing task of determining whether a function has Fourier dimension at most $k$, or is $\epsilon$-far from being so. We show that there is a $O(k/\sqrt{\epsilon})$-query quantum property tester for this problem, which we show to be almost optimal. Combined with Gopalan \etal's classical lower bound of $\Omega(2^{k/2})$, this demonstrates an exponential quantum advantage for this task \cite{DBLP:journals/siamcomp/GopalanOSSW11}. We complement this result with a $\tilde{O}(2^{k/2}/\epsilon)$ classical tester, giving a quadratic improvement over the previous best tester, and essentially settling the classical query complexity.
\end{abstract}
\section{Introduction}
One of the key problems in the field of quantum query complexity is trying to quantify the advantage that quantum computers have over classical computers. A key paradigm in which this has been studied is in the \emph{query complexity model}, where an algorithm is given black box access to a function(s), and is tasked to determine some property of the function. Here efficiency of the algorithm is measured in the number of queries needed to this black box. One area which naturally lends itself to the study of advantage in this model is that of \emph{property testing} of boolean functions, since the natural classical measure is already that of query complexity. Here, numerous properties have been studied, such as that of testing for $k$-juntas \cite{DBLP:journals/qip/AticiS07}, monotonicity \cite{DBLP:journals/toc/BelovsB15} and linearity \cite{chakraborty2013improvedquantumtestlinearity}, to name a few. However, perhaps more interestingly than whether such properties can be tested more efficiently is the question of \emph{how} much more efficiently these properties can be tested. That is, what is the largest gap in query complexity one can create between quantum and classical computers? 

This question began outside of property testing. One of the first demonstrations of exponential advantage comes from Simon's algorithm, which can be rephrased as distinguishing between whether a function has the property of being one-to-one or periodic (with the promise that the function satisfies one of the two). Here it was shown that a quantum computer could distinguish between these two with $O(n)$ quantum queries, whereas a classical computer required $\Omega(2^{n/2})$ queries to do so \cite{DBLP:journals/siamcomp/Simon97a}. This gap was later pushed by the flagship Forrelation problem, where Aaronson and Ambainis essentially showed there was a property which was testable with $O(1)$ quantum queries, yet required $\Omega(2^{n/2})$ classical queries \cite{DBLP:journals/siamcomp/AaronsonA18}. 

While these demonstrably show advantage for quantum computers, and can be interpreted under a property testing lens, there are (at least) two reasons why property testers may find these results unsatisfying. First is that, whenever possible, one is interested in properties that are efficiently testable (where ``efficiently'' here means independent of $n$). While these above problems do exhibit exponential advantage, they are not efficiently testable classical properties. Addressing this concern, Ben-David \etal~did succeed at exhibiting a graph property which was testable with $O(k)$ quantum queries, yet required $\Omega(2^{k/2})$ classical queries to test \cite{doi:10.1137/23M1573975} In this sense, it is a classical property that is testable independent of $k$, resolving this first issue. This leads to the second issue, which is that whilst these are very important and interesting theoretical results, one could object that they are not ``natural'' problems, in the sense that they were created for the specific purpose of exhibiting quantum advantage and do not, \emph{per se}, correspond to ``real-world`` applicable problems. As such, in light of these two objections, one could instead turn to the following question:
\begin{center}
    \textit{Is there exponential quantum advantage for a class of more natural, efficiently classically testable properties?}
\end{center}
Very recent work has shown that this might exist in the problem of tolerant junta testing. That is, for certain parameter regimes, Tal and Yuan showed the task of distinguishing whether a function $f$ is $\epsilon_1$ close to some $k$-junta, or $\epsilon_2$ far from every $k$-junta can be solved with $\poly(k)$ quantum queries, but required at least $k^{\Omega(\log k)}$ classical queries, the first super-polynomial quantum advantage for adaptive tolerant junta testing \cite{DBLP:conf/coco/TalY26}. However, this is not quite the exponential separation as achieved by that of Simon's problem and Forrelation.\footnote{Indeed, Tal and Yuan leave as an open question whether tolerant junta testing is one of these properties that admit an exponential advantage} If we allow ourselves to compare to passive property testing (where the classical queries are obtained at random, instead of an algorithm being able to choose where the queries go), then Caro \etal~do show the problems of symmetry and triangle freeness do exhibit this exponential advantage \cite{caro_et_al:LIPIcs.ITCS.2026.34}. However, this disappears once you allow the classical algorithm to query, as opposed to getting random samples, and this former setting is the one we are interested in. Thus we can continue looking for more natural problems which might give us such an exponential quantum advantage.

One natural property to study, with the above motivation in mind, is that of Fourier dimension. At a high level, a boolean function $f$ has Fourier dimension $k$ if the span of its nonzero Fourier coefficients is a subspace of dimension $k$ (see \cref{def:fourier_dim}). The task then, is to decide whether $f$ has Fourier dimension at most $k$, or if it is far from having Fourier dimension $k$. This task we call Fourier dimensionality testing (see \cref{def:fourier_dim_testing}). Though it may not immediately appear so, the Fourier dimension of a function is an interesting property to study in its own right. Looking at broader complexity theory, the Fourier dimension of a function is equal to its non adaptive parity decision tree depth \cite{DBLP:journals/toc/Sanyal19}. This means if it has Fourier dimension $k$, it requires $k$ XOR queries (chosen all at once nonadaptively) to uniquely determine its output on any input. Considering just XOR functions, Montanaro \etal~also showed the Fourier dimension of an XOR function is equal to its deterministic one way communication complexity, both classically and quantumly \cite{DBLP:journals/corr/abs-0909-3392}. This also has many implications, for example to sketching, since every XOR function is a linear sketch over $\mathbb{F}_2$ \cite{DBLP:conf/coco/KannanMSY18}. Turning specifically towards the property testing motivations, Fourier dimensionality can be thought of a generalization of junta testing. That is functions that have Fourier dimension $k$ are exactly linear $k$-juntas.\footnote{A function $f:\mathbb{F}_2^n\rightarrow\{-1,1\}$ is a linear $k$-junta if there are $k$ strings $u_1,\dots,u_k$ and a function $g:\mathbb{F}_2^k\rightarrow\{-1,1\}$ such that $f(x)=g(\langle u_1,x\rangle,\dots,\langle u_k,x\rangle)$.} One can think of this as testing for $k$-juntas, but in a basis-independent manner, as opposed to regular $k$-junta testing, which is a linear $k$-junta in the canonical basis. Indeed, this is the problem studied by De \etal, albeit they do so for real input functions \cite{DBLP:conf/colt/DeMN19}. As De \etal~points out, many natural classes of functions are included here, including $k$-juntas themselves, and functions of halfspaces. Finally, Gopalan \etal~also observe that having a small Fourier dimension implies having a small hidden truth table, meaning one can actually implicitly learn the truth table with a $\poly(2^k,1/\epsilon)$ tester \cite{DBLP:journals/siamcomp/GopalanOSSW11}. 

Gopalan \etal~first studied this problem of testing Fourier dimensionality, and gave an $O(k2^{2k}/\epsilon)$ tester \cite[Theorem 5.3]{DBLP:journals/siamcomp/GopalanOSSW11}. They also complemented this result with an $\Omega(2^{k/2})$ lower bound for any randomized algorithm \cite[Theorem 7.1]{DBLP:journals/siamcomp/GopalanOSSW11}. The upper bound was then improved by Alekseychuk and Konyushok, who gave an upper bound of $O(k^22^k/\epsilon)$ \cite{10.1007/s10559-013-9498-z}. 

There are a couple of reasons for why Fourier dimensionality is a good candidate for exhibiting a ``real-world'' exponential advantage. For one, as surveyed above, it is a previously studied problem with independent interest. But specifically for property testing, it is a problem with a known exponential (in $k$) lower bound. Furthermore, one of the main advantages quantum computers have over classical computers is their ability to easily access the Fourier transform via the Hadamard transform. Thus, whilst the main difficulty classically may be accessing the Fourier transform, there is hope that quantumly, one can get around this.  This leads us to the second, more specific question, which is the main focus of this paper:
\begin{center}
    \textit{Does the problem of Fourier dimensionality testing exhibit exponential query advantage?}
\end{center}
\subsection{Our Results and Techniques}
Our first result resolves the above question positively. Namely, there is a quantum tester for Fourier dimensionality with the following query complexity.
\begin{theorem}[Informal, see \cref{thm:quantum_tester}]
    There is a quantum algorithm using $O(k/\sqrt{\epsilon})$ queries that distinguishes between a function having Fourier dimension $k$, and being $\epsilon$-far from having Fourier dimension $k$ with probability at least $2/3$.
\end{theorem}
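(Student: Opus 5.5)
The tester is built on Fourier sampling. One query to $f$ followed by a Hadamard transform yields a string $s\in\mathbb{F}_2^n$ with probability $\hat f(s)^2$. I would run the following procedure and aim for $O(k/\sqrt{\epsilon})$ queries via amplitude amplification.

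\textbf{Procedure.} Maintain a subspace $V$, initially $\{0\}$. Repeatedly try to find a Fourier sample $s\notin V$. If one is found, set $V\leftarrow \vspan(V\cup\{s\})$. If $\dim V$ ever exceeds $k$, reject. If an attempt to find $s\notin V$ fails, accept.

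\textbf{Completeness.} If $f$ has Fourier dimension at most $k$, every Fourier sample lies in the span $W$ of the support of $\hat f$. Since $\dim W\le k$, we always have $V\subseteq W$, so $\dim V$ never exceeds $k$ and the tester never rejects. The tester is one-sided.

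\textbf{Soundness.} The key structural fact is that, for any subspace $V$, the closest function whose Fourier support lies in $V$ is essentially the projection $g=\sum_{s\in V}\hat f(s)\chi_s$. This projection equals $f$ averaged over cosets of $V^\perp$, so it is a conditional expectation. Rounding it by $\sgn(g)$ gives a boolean function with Fourier dimension at most $\dim V$, and
\[
\Pr[f\neq \sgn(g)] \le \mathbb{E}[(f-g)^2]=\sum_{s\notin V}\hat f(s)^2 .
\]
This holds because on each coset $f\neq\sgn(g)$ implies $|f-g|\ge 1$. Hence, if $f$ is $\epsilon$-far from Fourier dimension $k$, then for every $V$ with $\dim V\le k$ the Fourier mass outside $V$ is at least $\epsilon$.

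\textbf{Query count.} Fix the current $V$ with $\dim V\le k$. Prepare the Fourier-sampling state using one query. Let the ``good'' subspace be the span of $\{\lvert s\rangle : s\notin V\}$; membership in $V$ can be checked without queries, since $V$ is known classically. The good amplitude squared is at least $\epsilon$ whenever $f$ is far. Amplitude amplification then finds a sample outside $V$ with $O(1/\sqrt{\epsilon})$ queries and constant success probability. Each success increases $\dim V$ by one, so at most $k+1$ successes are needed before rejecting. That gives $O(k/\sqrt{\epsilon})$ queries in total.

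\textbf{Main obstacle.} The obstacle is the error probability across $k+1$ rounds. Naive repetition of each round would cost an extra $\log k$ factor. The cleaner route is to use exact or fixed-point amplitude amplification, or to allow a constant number of retries per round. The latter works because failures only happen in the far case, where each round succeeds with constant probability. The total number of attempts then behaves like a sum of geometric random variables, with expectation $O(k)$ attempts. By Markov's inequality, stopping after $Ck$ attempts keeps the overall error below $1/3$.

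One point still needs checking: accepting after a single failed amplified attempt must not cause a false accept. To handle this, the tester should accept only after its budget of $Ck$ attempts is exhausted without $\dim V$ exceeding $k$. The answer is still correct in the yes case, since the tester never rejects there. In the far case every attempt succeeds with constant probability, so $k+1$ successes occur within $Ck$ attempts with probability at least $2/3$.
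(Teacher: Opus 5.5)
Your proposal is correct and follows the paper's route. The tester is the same: Fourier sampling with $O(1/\sqrt{\epsilon})$-query amplitude amplification onto strings outside the current span, run for a fixed budget of $O(k)$ rounds, where your fixed-budget version is the paper's algorithm with the geometric/Markov error accounting (which the paper only alludes to) made explicit. The underlying lemma is also the same (being $\epsilon$-far forces Fourier mass outside every subspace of dimension at most $k$), except that you prove it via the rounding bound $\Pr[f\neq\sgn(g)]\le\mathbb{E}[(f-g)^2]$ to get mass at least $\epsilon$, whereas the paper computes $\delta_U(f)=\tfrac12\bigl(1-\mathbb{E}_x|\Pi_U f(x)|\bigr)$ exactly and uses $1-|t|\le 1-t^2$ to get $2\epsilon$, a constant factor that does not affect the query bound.
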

This alone already gives an instance of a property testing problem with exponential query advantage, akin to that of Simon's problem. The main idea of this algorithm is as follows: A quantum algorithm can sample from the Fourier spectrum of the function $f$. More concretely, in one quantum query, we can sample a string $s$, with probability equal to $\hat{f}(s)^2$, where $\hat{f}(s)$ is the Fourier coefficient of the string $s$. Then, to distinguish $f$ from having Fourier dimension $k$, or being far from having Fourier dimension $k$, at a high level it suffices to continue taking samples until we have a subspace of dimension $k+1$, or until with high probability, we are confident this is not the case. There are two things of note to make this formal. First, this only works if at every step, we have a nontrivial probability of sampling a string outside the current subspace induced by the previously sampled vectors. Indeed, it may be the case that although $f$ is far from having Fourier dimension $k$, we cannot actually sample outside this subspace with any good probability. We show this is not the case, and that if $f$ is far from having Fourier dimension $k$, then it always has $\Omega(\epsilon)$ Fourier mass outside of any subspace of dimension $k$ (see \cref{lem:Fourier_mass} and \cref{cor:fourier_mass} for the formal statements). Thus, at every time point, we can use Fourier sampling and find with probability $O(\epsilon)$ a string that extends the current subspace, until our current subspace has dimension $k+1$, upon which point we can reject. The second technicality, however, is doing the above would only yield a tester that has $O(1/\epsilon)$ dependence, since we need this many samples to see a string that extends the subspace. To improve this dependence on $\epsilon$, instead of naively using Fourier sampling, we first use amplitude amplification to amplify the magnitude on strings outside the current subspace. Doing this uses $O(1/\sqrt{\epsilon})$ queries, but ensures that after measurement we, with high probability, obtain a string that does actually extend the subspace, giving the improved dependence on $\epsilon$.

It is worth commenting that the first observation, that $f$ being far from having Fourier dimension $k$ implies it always has nontrivial mass outside of any subspace of dimension $k$, while convenient, is not necessarily obvious or trivial. In general, a property of the Fourier spectrum cannot always be detected via Fourier sampling. For example, one can consider the seemingly related task of testing whether a function has Fourier degree $k$, or is $\epsilon$-far from having Fourier degree $k$.\footnote{As a high level summary, a function has Fourier degree $k$ if the Fourier coefficient associated with all strings that have Hamming weight at least $k+1$ are zero} One could try to design a similar quantum tester for this problem: that is, repeatedly use Fourier sampling (with or without amplitude amplification), and reject upon finding a string with high hamming weight. The assumption required to make this work is that if $f$ is $\epsilon$-far from having Fourier degree $k$, then it will have a large enough Fourier mass on strings with Hamming weight at least $k+1$ to be detected efficiently with Fourier sampling. However, this is not the case. Consider the address function, which is a $k+2^k$ junta defined by choosing $k$ ``address'' bits $\{a_i\}_{i\in[k]}$, and $2^k$ addresses $b_i\in\{\pm1\}$. Then, the address function is $f(a,b)=b_a$, which can be padded out to an $n$-variate function. Then, this function can be constructed to be constant far from having Fourier degree $k$, but have total Fourier mass on levels greater than $k$ equal to $O(2^{-k})$. Hence, it will take $O(2^k)$ Fourier samples to find one of these strings, meaning we do not get a linear quantum tester. The key issue is that in this case, being far from Fourier degree $k$ does not imply there is non-negligible Fourier mass on the higher levels, unlike in the Fourier dimensionality case, where being far from Fourier dimension $k$ does imply non-negligible Fourier mass outside any subspace of dimension $k$.

To conclude our quantum contribution, we complement this result with a lower bound tight in $k$, showing our tester is essentially optimal:
\begin{theorem}[Informal, see \cref{thm:quantum_lower_bound}]Any quantum tester for Fourier dimensionality requires at least $\Omega(k)$ queries.
    
\end{theorem}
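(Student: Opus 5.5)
We will prove the $\Omega(k)$ lower bound by reducing from a problem with a known linear quantum query lower bound. The most natural choice is to embed a problem about \emph{learning or distinguishing subspaces}. Concretely, we reduce from distinguishing a random linear $k$-junta from a random linear $(k+1)$-junta built from bent-like (or generic) functions. An alternative cleaner reduction is from a problem of the form ``Hamming weight $\le k$ versus $\ge k+1$'' (a threshold / counting problem). That problem has quantum query complexity $\Omega(\sqrt{k(N-k)})$, which is $\Omega(k)$ when $N=2k$, by Beals \etal\ / the polynomial method.

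\textbf{The reduction.} Given a string $y\in\{0,1\}^{N}$ with $N = 2k+2$, we build a function $f_y:\mathbb{F}_2^n\to\{-1,1\}$ whose Fourier dimension equals (roughly) the Hamming weight $|y|$, such that one query to $f_y$ costs $O(1)$ queries to $y$.
\begin{itemize}
\item Split the input into blocks of variables $x^{(1)},\dots,x^{(N)}$, each of constant size, possibly $2$ bits.
\item Set $f_y(x) = \prod_{i} g_{y_i}(x^{(i)})$, where $g_0\equiv 1$ and $g_1$ is a fixed nonconstant function on the block, for instance the parity $x^{(i)}_1$.
\end{itemize}
A product over disjoint variable sets has spectrum equal to the sum of the spectra of the factors. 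So with parity blocks $f_y$ is a single character, and its Fourier dimension is only $1$; that choice fails. We therefore need each block to contribute a full-rank spectrum, i.e.\ $g_1$ must be a bent function, e.g.\ $g_1(a,b)=(-1)^{ab}$ on $2$ bits. This function has Fourier dimension $2$ and all four coefficients nonzero. Then $f_y$ has Fourier dimension exactly $2|y|$, and one query to $f_y$ requires querying all $y_i$. That makes the reduction too costly, so a product construction does not give $O(1)$ cost per query.

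\textbf{Fix via a quantum-friendly oracle.} Quantumly, $f_y$ can be evaluated with $O(1)$ queries to $y$ if the function depends on only one block at a time. So instead we use an addressing construction:
\[
f_y(z,x) = (-1)^{y_{z}\cdot h(x^{(z)})},
\]
with $z$ an address. We then compute the Fourier dimension: each $i$ with $y_i=1$ contributes new directions, so the dimension grows linearly with $|y|$, plus $O(\log N)$ for the address bits. Each evaluation uses one query to $y$. We then take $k' = c|y|$ thresholds so that:
\begin{itemize}
\item weight $\le t$ gives dimension $\le k$;
\item weight $\ge t+1$ forces $f_y$ to be constant-$\epsilon$-far from dimension $k$.
\end{itemize}
Farness will follow from \cref{cor:fourier_mass} in reverse, or from a direct bound: any function of dimension $k$ agrees poorly with a function having many independent bent-like parts each of mass $\Omega(1/N)$.

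\textbf{Main obstacle.} The hard step is this farness claim, since each block carries only $1/N$ of the mass. Distance from a single extra block is then only $O(1/N)$, so $\epsilon$ would shrink with $k$. To avoid this, I would instead distinguish weight $k/2$ from weight $k$, a gap-Hamming / approximate counting problem. Its quantum complexity is still $\Omega(k)$ for $N=\Theta(k)$, by the polynomial method (Nayak–Wu). In the far case, $\Omega(k)$ extra independent directions each carry $\Omega(1/k)$ mass, which should give constant distance. I would prove that constant-distance bound via a projection argument: the closest dimension-$k$ function can capture only a constant fraction of the blocks' Fourier mass.
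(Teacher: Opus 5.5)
\textbf{There is a genuine gap, at exactly the step meant to fix the farness problem.} You claim that distinguishing Hamming weight about $k/2$ from about $k$ on $N=\Theta(k)$ bits has quantum query complexity $\Omega(k)$. This is false. The Nayak--Wu bound for distinguishing $|y|=t$ from $|y|=t+\Delta$ scales as $\sqrt{N/\Delta}+\sqrt{t(N-t)}/\Delta$, which is $O(1)$ when $t$, $\Delta$ and $N$ are all $\Theta(k)$. Even more simply, reading $O(1)$ uniformly random coordinates of $y$ already separates relative weight about $1/4$ from about $1/2$ classically.

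This cannot be repaired by retuning parameters, because the obstruction is structural. In your addressing construction, bit $y_i$ only affects the slice $z=i$, which is a $1/N$ fraction of the domain, so $\dist(f_y,f_{y'})\le |y\oplus y'|/N$. Since $\fdim(f_y)$ depends only on $|y|$, any $y$ with $f_y$ $\epsilon$-far from $\mathcal{D}_k$ must have weight at least $\epsilon N$ above every yes-weight. The induced promise problem on $y$ is therefore a symmetric counting problem with relative gap $\epsilon$. It is solvable with $O(1/\epsilon^2)$ classical samples, independently of $k$.

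Your first variant (threshold $t$ versus $t+1$, with cost $\Theta(\sqrt{t(N-t)})$) does give $\Omega(k)$ queries. But it only achieves distance $\Theta(1/N)=\Theta(1/k)$, whereas \cref{thm:quantum_lower_bound} fixes a constant $\epsilon<1/4$. The projection argument you planned for farness would not help either, since the bottleneck is the query complexity of the source problem, not the distance bound.

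\textbf{What is missing} is a source problem whose hardness survives a constant-distance promise, which weight-based (symmetric) problems never provide. The paper uses Simon's problem on $m=k+1$ bits and composes $g$ with a uniformly random mask $M:\mathbb{F}_2^m\to\{\pm1\}$.
\begin{itemize}
\item If $g$ has period $s$, then so does $f=M\circ g$. Hence $\supp\hat f\subseteq s^\perp$ and $\fdim(f)\le k$.
\item If $g$ is injective, then $f$ is a uniformly random function. A Hoeffding bound plus a union bound over the $2^m$ candidate periods (\cref{lem:random_fourier_dimension}) shows $f$ is constant-far from every function with a nonzero period. On $m$ variables that class is exactly $\mathcal{D}_{m-1}$, by \cref{lem:constant_cosets}.
\item A phase query to $f$ costs $O(1)$ queries to $g$, so the $\Omega(m)$ bound of \cref{thm:Simons_lb} finishes the argument.
\end{itemize}
The hidden structure here is a single global period, which a random function violates at constant distance and which is still quantumly hard to detect. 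That combination is precisely what your encoding of the weight of $y$ cannot supply.
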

We prove this lower bound via a reduction to Simon's problem, which has an $\Omega(k)$ lower bound, as shown by Koiran \etal~\cite{DBLP:conf/icalp/KoiranNP05}.

Our second main contribution is to improve the classical tester for Fourier dimensionality. We show the following
\begin{theorem}[Informal, see \cref{thm:classical_tester}]
    There is a classical tester which uses $\tilde{O}(2^{k/2}/\epsilon)$ queries, and distinguishes between a function having Fourier dimension at most $k$, and $f$ being $\epsilon$-far from having Fourier dimension $k$ with probability at least $2/3$.
\end{theorem}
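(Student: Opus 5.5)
The plan is to reduce the problem to estimating, by a birthday-paradox collision count, how many inputs $z$ are (near-)periods of $f$. The underlying fact is that $f$ has Fourier dimension at most $k$ exactly when $f(x)=g(Ax)$ for a $k\times n$ matrix $A$. Equivalently, the subspace $P=\{z: f(x+z)=f(x)\ \forall x\}$ of periods has codimension at most $k$, and so a uniformly random $z$ is a period with probability at least $2^{-k}$. Finding a single period directly costs $2^k$ trials. Instead I would count collisions among $m\approx 2^{k/2}$ random points, where two points collide if they lie in the same coset of $P$. Collisions cannot be seen directly, so each point gets a ``signature'': the values of $f$ on a shared batch of random shifts.

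\textbf{Algorithm.} Fix $T=O(\log(2^k/\epsilon))$ shared uniformly random shifts $x_1,\dots,x_T$. Draw $m=\tilde{O}(2^{k/2}/\epsilon)$ uniformly random points $y_1,\dots,y_m$. For each $y_i$, query the signature $\sigma(y_i)=(f(y_i+x_1),\dots,f(y_i+x_T))$, for a total of $mT=\tilde O(2^{k/2}/\epsilon)$ queries. Let $C$ be the number of pairs $i<j$ with $\sigma(y_i)=\sigma(y_j)$. Accept if and only if
\[
C\ \ge\ (1-c\epsilon)\binom{m}{2}2^{-k}
\]
for a suitable constant $c$.

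\textbf{Completeness.} If $f$ has Fourier dimension at most $k$, the signature is a function of the coset $y+P$, and there are at most $2^k$ cosets. By Cauchy--Schwarz, the collision probability is then at least $2^{-k}$. A standard second-moment (U-statistic) variance bound shows that with $m=O(2^{k/2}/\epsilon)$ samples, $C$ concentrates within a $(1\pm c\epsilon/2)$ factor. So the tester accepts.

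\textbf{Soundness.} Suppose $f$ is $\epsilon$-far. The signature collision probability equals
\[
\Pr_z\bigl[f(x_t+z)=f(x_t)\ \forall t\bigr],\qquad z=y+y'.
\]
Since the $x_t$ are shared, this is, up to $2^{-\Omega(T)}$ losses, the density of $z$ with $\rho(z):=\mathbb{E}_x[f(x)f(x+z)]\ge 1-\delta$ for $\delta\approx 1/T$. Here
\[
\rho(z)=\sum_s \hat f(s)^2(-1)^{s\cdot z}.
\]
The goal is to show that the set $Z_\delta=\{z:\rho(z)\ge 1-\delta\}$ has density at most $(1-2c\epsilon)2^{-k}$. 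First, $Z_\delta$ is approximately closed under addition, by the triangle-type inequality $1-\rho(z+z')\le 2(1-\rho(z))+2(1-\rho(z'))$, which follows because $1-\rho(z)=2\Pr_x[f(x)\ne f(x+z)]$. An approximate-subgroup argument then shows that if $Z_\delta$ had density at least $(1-2c\epsilon)2^{-k}$, its span $W$ would have codimension at most $k$. Moreover, averaging $\rho$ over $W$ gives $\sum_{s\in W^\perp}\hat f(s)^2\ge 1-O(\delta+\epsilon)$. Hence $f$ would put $1-O(\epsilon)$ of its Fourier mass on the dimension-$k$ subspace $W^\perp$. By \cref{lem:Fourier_mass} and \cref{cor:fourier_mass}, this contradicts $f$ being $\epsilon$-far, provided the constants are chosen correctly. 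Since $m$ is large enough to separate the thresholds $2^{-k}$ and $(1-2c\epsilon)2^{-k}$, the tester rejects.

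\textbf{Main obstacle.} I expect the soundness step to be the hardest part, specifically converting ``many near-periods'' into ``a codimension-$k$ subspace carrying almost all the Fourier mass'' with only a $(1-\Theta(\epsilon))$ multiplicative slack. My first attempt would use the Fourier identity
\[
\mathbb{E}_z\bigl[\rho(z)\,\mathbbm{1}_{Z_\delta}(z)\bigr]=\sum_s\hat f(s)^2\,\widehat{\mathbbm{1}_{Z_\delta}}(s),
\]
which bounds the density of $Z_\delta$ by the maximum Fourier mass of $f$ on a dimension-$k$ subspace. If this route loses too much, I would fall back to a two-stage tester. The first stage uses collisions to find differences $y_i+y_j$ that serve as candidate periods. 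The second stage checks directly, with $O(1/\epsilon)$ queries per candidate, that $f$ is invariant under their span. Soundness would again come from \cref{cor:fourier_mass}.
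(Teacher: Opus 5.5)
\textbf{The soundness step has a genuine gap.} Completeness is essentially fine, since only the lower tail of $C$ matters there. The problem is that $T=O(\log(2^k/\epsilon))=O(k+\log(1/\epsilon))$ shared shifts are too few for the signatures to separate cosets of an $\epsilon$-far function, so your tester accepts some far functions.

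Concretely, let $f(x)=-1$ iff $x_1=\dots=x_r=1$, with $r>k$ and $\epsilon=2^{-r}$. Take any $U$ with $\dim U\le k<r$. Then $U^\perp\not\subseteq\{x_1=\dots=x_r=0\}$, so every coset of $U^\perp$ has at most half its points in $\{f=-1\}$. The coset-majority function is therefore the constant $1$, and $\dist(f,\mathcal{D}_k)=2^{-r}=\epsilon$. Yet for any choice of shifts, $\Pr_{y,y'}[\sigma(y)\neq\sigma(y')]\le 2T2^{-r}$, because each $y+x_t$ is uniform. So $C$ is close to $\binom{m}{2}$, far above $(1-c\epsilon)\binom{m}{2}2^{-k}$, and you accept.

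The same failure occurs for constant $\epsilon$. For large $n$, a random $f$ with a $2\epsilon$-fraction of $-1$'s is about $2\epsilon$-far. Its signature collision probability is about $(1-4\epsilon)^T$, which exceeds $2^{-k}$ unless $T=\Omega(k/\epsilon)$.

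In your argument the failure shows up in two places:
\begin{itemize}
\item For $z\notin Z_\delta$, a single shift distinguishes $y$ from $y'$ only with probability about $\delta/2$. With $\delta\approx 1/T$, the leftover term is $(1-\delta/2)^T=\Theta(1)$, not $2^{-\Omega(T)}$.
\item The conclusion ``Fourier mass $1-O(\delta+\epsilon)$ on $W^\perp$'' contradicts nothing. \cref{cor:fourier_mass} only guarantees mass at least $2\epsilon$ outside every $k$-dimensional subspace, and $O(\delta)=O(1/T)\gg\epsilon$. Even an $O(\epsilon)$ bound would be useless unless its constant were below $2$.
\end{itemize}
The fallback two-stage tester does not rescue this. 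A $(k+1)$-junta is far from dimension $k$ but has a whole codimension-$(k+1)$ subspace of exact periods, so candidate periods extracted from collisions pass any invariance check. The distinguishing power has to come from the collision rate itself.

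\textbf{The missing idea} is that each shift acts like one Fourier sample. By \cref{lem:fixed_tuple_fourier_prob} and \cref{cor:tuple_fourier_collisions}, two random points collide with probability $\mathbb{E}_{S_1,\dots,S_m\sim\mu_f}[2^{-\dim\vspan(S_1,\dots,S_m)}]$. By \cref{cor:fourier_mass}, while the current span has dimension at most $k$, each new sample leaves it with probability only guaranteed to be at least $2\epsilon$. So one needs $m=\Theta(k/\epsilon)$ shifts. With that many, the per-pair collision probability is at most $2^{-k-1}+e^{-\epsilon m}\le 0.51\cdot 2^{-k}$.

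This is a constant-factor gap against the $2^{-k}$ of the yes case. Hence $O(2^{k/2})$ points and a constant threshold suffice: the paper simply records whether any collision occurs, $O(1)$ times. The total is $O(k2^{k/2}/\epsilon)$ queries. No approximate-subgroup argument and no $(1-c\epsilon)$ threshold are needed. By contrast, patching your version by raising $T$ to $k/\epsilon$ while keeping $2^{k/2}/\epsilon$ points would cost $k2^{k/2}/\epsilon^2$ queries.
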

This has two main consequences. First, it is a tester that is essentially tight against the lower bound of $\Omega(2^{k/2})$ presented by Gopalan \etal, up to lower order terms \cite[Theorem 7.1]{DBLP:journals/siamcomp/GopalanOSSW11}. Second, combined with the above, this shows that testing Fourier dimensionality, and thus as an extension, testing for $k$-linear juntas, is a problem which exhibits exponential quantum advantage. This is the type of advantage exhibited by Simon's problem, though arguably for a much more natural and less contrived problem.

The design of our classical tester is inspired by that of Gopalan \etal, and thus it helps to overview their approach first. For the sake of exposition, assume the strings with nonzero Fourier coefficients in $f$ span the subspace $U$. Then, the Fourier dimension of $f$ equals the dimension of $U$. The main observation is that $f$ is periodic on strings in $U^\perp$. Then, they sample $\ell=O(2^k)$ points $x_i$, and $m=O(k2^k/\epsilon)$ shifts $h_j$ for each $h_i$. Then, if it is the case that $f(h_j)=f(h_j+x_i)$ for all $h_j$, then $x_i$ likely is in $U^\perp$, and they add it to a tracking set $H$. They then show that the size of $H$ is different for if $f$ has Fourier dimension $k$, as opposed to when $f$ is far from having Fourier dimension $k$, and test for this difference in size. Intuitively, one can see that if $f$ has Fourier dimension at most  $k$, then $|U^\perp|\geq2^{n-k}$, whereas is $f$ is far from having Fourier dimension at most $k$, then $|U^\perp|\leq 2^{n-k-1}$, and one can try to test for this gap.

Our starting point uses the same observation, but in a different manner. We start by observing that two strings $x$ and $y$ have $f(x)=f(y)$ if they are in the same coset induced by $U^\perp$ (see \cref{ss:group} if one is unfamiliar with a coset). This also means for any shift $z$, $f(x+z)=f(y+z)$, since after a shift, if two vectors start in the same coset, they will also be shifted into the same coset (though of course, this is not necessarily the one they started in). Further, we note $U^\perp$ induces at most $2^{n-\dim{U}}$ many cosets. Our classical tester then is really inspired by this fact, and a simple balls and bins observation. 

Imagine first, in a very simplified world, assume we could query a point, and find out which coset it lies in (and that we have some canonical way of identifying the cosets such that this makes sense). Then, if $f$ has Fourier dimension at most $k$, we know there is at most $2^{k}$ cosets, and thus, by a balls and bins argument, we should expect to see a collision with $O(2^{k/2})$ queries. On the other hand, if $f$ is far from having Fourier dimension $k$, then it has at least $2^{k+1}$ cosets, and thus we should not expect to see (as many) collisions. Thus, we can sample $\ell=O(2^{k/2})$ points, $x_1,\dots,x_\ell$,and test the number of collisions we see, to distinguish between these two cases. 

Of course, the issue with the above approach is that we do not have some canonical way of identifying the cosets, because $f$ is a binary function, and thus is not sufficiently granular. To do this, we additionally sample $m=O(k/\epsilon)$ shifts $h_1,\dots,h_m$, and construct the tuple, for each $x_i$ in the $\ell$ query points,
\begin{equation*}
    \Phi(x_i)=(f(x_i+h_1),f(x_i+h_2),\dots,f(x_i+h_m)).
\end{equation*}
How, if $x_i$ and $x_j$ ``collide'', in the sense that they are in the same coset, then $\Phi(x_i)=\Phi(x_j)$. On the other hand, if they land in different cosets, then we show we can bound the probability that $\Phi(x_i)=\Phi(x_j)$, and thus bound the probability of seeing collisions. In this way, we can test for collisions in $\Phi(x_i)$, which is actually granular enough to distinguish whether $x_i$ and $x_j$ are in different cosets. Interestingly, the analysis of these classical fingerprints reconnects to the quantum tester, since we show that bounding the probability that two fingerprints collide can be reduced to bounding the probability that a sequence of Fourier samples is orthogonal to $x_i+x_j$.
\subsection{Related Works}
\paragraph{Prior work on Fourier dimensionality testing. }As noted above, Fourier dimensionality testing was first studied by Gopalan \etal~\cite{DBLP:journals/siamcomp/GopalanOSSW11}, and then by Alekseychuk and Konyushok \cite{10.1007/s10559-013-9498-z}. We already recapped the methodology of the former, so we briefly overview that of Alekseychuk and Konyushok. At a high level, they take a random matrix $X$ of dimension $m\times n$, for some appropriately chosen $m$. Then they show that if $f$ has Fourier dimension at most $k$, you can find some other function $f'(u)=f(uX)$, which also has dimension at most $k$. The key idea is that $f'(u)$ is an $m$-variable function, instead of an $n$-variable one. Then, you can try to compute the Fourier transform of $f'(u)$, and reject if the dimension is above $k$. A lot of technical work goes into showing the dimensionality is preserved, whilst our approach relies on a simple balls and bins observation. 

\paragraph{Related work on linear junta testing. }While not directly studied as testing Fourier dimensionality, De \etal~considered the task of testing linear $k$-juntas \cite{DBLP:conf/colt/DeMN19}, which is equivalent to testing for low Fourier dimension. They specifically work on testing on real valued functions with respect to the Gaussian measure. They show that whilst in general this class is not testable, adding an assumption on the surface area of the boundary of a function allowed for an efficient tester. We also refer to their exposition on how this relates to testing and learning of functions of half-spaces and invariant testing, which are properties also related to testing for linear $k$-juntas, and thus, low Fourier dimension.

\paragraph{Broader property testing advantages. }Quantum advantages have also been studied in other property testing regimes, apart from boolean property testing. For example distributional property testing questions, such as entropy estimation or closeness testing, were studied by Gily\'{e}n and Li \cite{DBLP:conf/innovations/GilyenL20}, though no exponential advantage was proved in this setting. There is also a list of query bounds for distributions regarding problems like hypothesis testing and uniformity testing outlined in \cite{chen2025listcomplexityboundsproperty}, using sample to query lifting, though again, none of exponential advantage. Canonne \etal~also show an advantage for uniformity testing \cite{DBLP:conf/tqc/CanonneKO25}. Though this does not exhibit the exponential advantage that is sought after in this paper, it is interesting to note another possible technique for producing advantages. That is, the main tool used in this paper is Fourier sampling, but Canonne \etal~assume the algorithm also has access to the circuit which produces the samples, and thus allows one to implement the inverse of such a circuit. Such an access model has not only been shown to be more powerful in property testing, but also in quantum PAC learning \cite{salmon24provable}. Ambainis \etal~also consider the problem of graph property testing, where they show bipartiteness and expansion are properties that can be tested quantumly with polynomial advantage \cite{10.1007/978-3-642-22935-0_31}. Notably, in this model of graph property testing, Ben-David \etal~were able to construct a problem in the adjacency list model that had exponential speedup \cite{doi:10.1137/23M1573975}, although this advantage was not present if the adjacency matrix model was used. Namely, it is a property that has $\poly(k)$ query complexity for a quantum tester, but $\exp(\Omega(k))$ query complexity for a classical tester when tested in the adjacency list model. The property is inspired by the welded trees problem, and whilst it gives the type of property testing exponential advantage we are after, it does have the downside of being a rather contrived property designed to be easy for quantum computers, and not a ``real-world'' motivated problem.

\paragraph{Property testing of quantum objects. }If one expands from just studying advantages, and takes an interest in property testing of quantum properties, this is also a task that is well studied. Here, the goal is to get algorithms that require few copies of the state, rather than few samples from a function. The aim is to be efficient in $d$, the dimension of the state. For example, the problem of testing and learning quantum $k$-juntas, the quantum analogue of $k$-junta testing, was studied by Chen \etal~\cite{doi:10.1137/1.9781611977554.ch43}, and the tolerant version was studied by Chen \etal~\cite{chen2024tolerantquantumjuntatesting}. O'Donnell and Wright also studied property testing of a state's spectrum (that is, whether a state's spectrum satisfies or is far from satisfying some property). Some other quantum properties studied include complete positivity and separability \cite{buadescu2020lower}, stabilizer state testing \cite{gross2021schur} and matrix product state testing \cite{soleimanifar2022testing}, to name a few.
\section{Preliminaries}
Throughout, we let a boolean function on $n$ variables be any function $f:\{0,1\}^n\rightarrow\{-1,1\}$. This is equivalent to the definition mapping into $\{0,1\}$, but since we will be dealing with Fourier transformations, this range is more convenient to work with. We will also sometimes refer to the domain as $\mathbb{F}_2^n$, and refer to the inputs interchangeably either as strings or vectors. For two strings $x,y\in\mathbb{F}_2^n$, we write $x\oplus y$ as their coordinate wise addition modulo 2, or the XOR of $x$ and $y$. Sometimes when clear, we also drop the $\oplus$ notation and refer to it as $x+y$. We let $|x|$ be the hamming weight of the string.
\subsection{Fourier Definitions}
We start by defining the Fourier character.
\begin{definition}[Fourier Character]\label{def:fouer_character}
Given a set $S\subseteq[n]$, the Fourier character corresponding to the set $S$ (or simply the Fourier character) is given by:
\begin{equation*}
    \chi_S(x)=\prod_{i\in S}(-1)^{\bigoplus x_i}=(-1)^{\sum_{i\in S}x_i}.
\end{equation*}
    Abusing notation, and letting $S$ also denote the indicator vector for the set $S$, we can write the Fourier character as
    \begin{equation*}
        \chi_S(x)=(-1)^{\langle S,x\rangle},
    \end{equation*}
    where the inner product is the standard dot product.
\end{definition}
We can now define the Fourier transform.
\begin{definition}[Fourier Transform]\label{def:fourier_transform}
    The Fourier transform of a boolean function $f$ is given by
    \begin{equation*}
        f(x)=\sum_{S\subseteq[n]}\hat{f}(S)\chi_S(x),
    \end{equation*}
    where $\hat{f}(S)$ are known as the Fourier coefficients. For ease, sometimes instead of referring to $S$ as sets, we again abuse notation and let $S$ denote the indicator vector for the set $S$. They themselves are functions $\hat{f}:2^{[n]}\rightarrow\mathbb{R}$ given by
    \begin{equation*}
        \hat{f}(S)=\mathbb{E}_x[f(x)\chi_S(x)]=2^{-n}\sum_xf(x)\chi_S(x).
    \end{equation*}
\end{definition}
A standard identity about the Fourier coefficients comes from Parseval's:
\begin{theorem}[Parseval's Identity]\label{thm:parseval}
    \begin{equation*}
        \sum_{S\subseteq[n]}\hat{f}^2(S)=\mathbb{E}_x[f(x)^2]=1,
    \end{equation*}
    where we get the last equality by using the fact that $f(x)$ has range $\{\pm1\}$.
\end{theorem}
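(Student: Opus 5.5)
The plan is to derive Parseval's identity from the orthonormality of the characters $\{\chi_S\}_{S\subseteq[n]}$ under the inner product $\langle g,h\rangle=\mathbb{E}_x[g(x)h(x)]$. Orthonormality is the first thing to establish. Since $\chi_S(x)\chi_T(x)=(-1)^{\langle S,x\rangle+\langle T,x\rangle}=\chi_{S\oplus T}(x)$, it suffices to show that $\mathbb{E}_x[\chi_U(x)]$ equals $1$ if $U=\emptyset$ and $0$ otherwise.

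\begin{itemize}
\item If $U=\emptyset$, then $\chi_U\equiv 1$ and the expectation is $1$.
\item If $U\neq\emptyset$, pick $i\in U$ and pair each $x$ with $x\oplus e_i$. This is an involution on $\{0,1\}^n$ that flips the sign of $\chi_U$, so the sum over all $x$ vanishes.
\end{itemize}

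Alternatively, we can factor $\mathbb{E}_x[\chi_U(x)]=\prod_{i\in U}\mathbb{E}_{x_i}[(-1)^{x_i}]=0$.

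Next, substitute the Fourier expansion from \cref{def:fourier_transform} into $\mathbb{E}_x[f(x)^2]$:
\begin{equation*}
\mathbb{E}_x[f(x)^2]=\sum_{S,T\subseteq[n]}\hat f(S)\hat f(T)\,\mathbb{E}_x[\chi_S(x)\chi_T(x)]=\sum_{S}\hat f(S)^2.
\end{equation*}
Only the diagonal terms $S=T$ survive, by orthonormality. One point deserves attention. \cref{def:fourier_transform} both asserts that the expansion holds and defines the coefficients $\hat f(S)$ as $\mathbb{E}_x[f(x)\chi_S(x)]$, and the expansion is actually needed here. It follows from orthonormality together with dimension counting: the $2^n$ characters are orthonormal, hence linearly independent, in the $2^n$-dimensional space of real functions on $\{0,1\}^n$, so they form a basis. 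The coefficient of $\chi_S$ in the expansion of $f$ is then $\langle f,\chi_S\rangle=\hat f(S)$, which matches the paper's formula.

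Finally, since $f(x)\in\{-1,1\}$, we have $f(x)^2=1$ for every $x$, so $\mathbb{E}_x[f(x)^2]=1$.

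The only step requiring any care is the orthogonality computation (the vanishing of $\mathbb{E}_x[\chi_U]$ for $U\neq\emptyset$); everything else is bookkeeping, so I do not expect a real obstacle.
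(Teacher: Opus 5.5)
Your proof is correct; it is the standard argument via orthonormality of the characters $\chi_S$, including the point that the expansion in \cref{def:fourier_transform} follows because the $2^n$ characters form an orthonormal basis. The paper gives no proof of its own: it treats Parseval as a standard identity and justifies only the final equality via $f(x)\in\{\pm1\}$, which is the same last step you use.
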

\begin{definition}[Fourier Dimension]\label{def:fourier_dim}
    A boolean function $f$ has \emph{Fourier dimension} $k$ if 
    \begin{equation*}
        \fdim(f) \coloneqq \dim(\vspan(\supp \hat{f})) = k.
    \end{equation*}
    In other words, the set of the strings corresponding to nonzero Fourier coefficients span a subspace of dimension $k$.
\end{definition}
To conclude this section, we will let $\mathcal{D}_k$ be the set of all functions with Fourier dimension at most $k$. Further, for any subspace $U$, we will let
\begin{equation*}
    C_U=\{g:\supp(\hat{g})\subseteq U\},
\end{equation*}
and let $\Pi_U$ be the projector onto $U$. Hence we have that
\begin{equation*}
    \Pi_U f(x)=\sum_{S\in U}\hat{f}(S)\chi_S(x).
\end{equation*}
\subsection{Property Testing}
We start by defining the notion of distance for boolean functions.
\begin{definition}[Distance]\label{def:distance}
    Let $f$ and $g$ be two boolean functions. The distance between $f$ and $g$ is given by 
    \begin{equation*}
        \dist(f,g)=\Pr_x[f(x)\neq g(x)].
    \end{equation*}
\end{definition}

We now outline the main problem we are concerned with.
\begin{definition}[Fourier Dimension Testing]\label{def:fourier_dim_testing}
    Let $f$ be a boolean function, which is promised to either:
    \begin{itemize}
        \item have Fourier dimension at most $k$, $\fdim(f) \leq k$, or
        \item be $\epsilon$-far from any function with Fourier dimension at most $k$.
    \end{itemize}
    Given query access to $f$, what is the minimum number of queries required to determine which case $f$ belongs to with probability at least $2/3$?
\end{definition}
Let us conclude by defining, for any subspace $U$,
\begin{itemize}
    \item $\delta_U(f)=\dist(f,C_U)$
    \item $\wt(U)=\sum_{S\in U}\hat{f}(S)^2$
    \item $\wt(\bar{U})=\sum_{S\not\in U}\hat{f}(S)^2$.
\end{itemize}
The definition of $\wt(\cdot)$ is exactly the same as used by Gopalan \etal~\cite{DBLP:journals/siamcomp/GopalanOSSW11}. Though itomits which function $f$ it refers to, for our purposes, it will always be clear from context which function we are evaluating the weights of.
\subsection{Miscellaneous Group Theory}\label{ss:group}
We also briefly recall some group-theoretic notation. For a subgroup $H$, and a fixed vector $g$, we let $g+H$ be the set $\{g+h:h
\in H\}$. We let $\langle a_1,\cdots ,a_k\rangle$ denote the set of vectors generated by $a_1,\dots,a_k$. In other words, it is the set of vectors spanned by $\{a_1,\dots,a_k\}$. We call the $a_i$ the generators. A set of generators can be seen to partition the full space into cosets as follows. Let $A=\{a_i\}$ be a set of generators $\{a_i\}$ for a subgroup $H$. Then, $A$ partition the space $\mathbb{F}_p^n$ into $p^n/|H|$ cosets: to obtain the cosets, we sequentially choose any fixed vector $g_i$ (not already in a coset) and assign to its coset all the vectors in the set $g+H$, repeating until every vector in the space falls in a coset. We then call each $g_i$ a coset representative. For shorthand, we may also use $g_i$ to refer to the coset as well.
\section{Quantum Tester for Fourier Dimensionality}
In this section, we exhibit a quantum algorithm for testing Fourier dimensionality. It relies on the ability for quantum algorithms to sample according to the Fourier spectrum. Namely, we have the following:
\begin{lemma}\label{lem:fourier_sampling}
    Let $f$ be a boolean function. Then, there is an algorithm which uses $1$ quantum query and produces the state
    \begin{equation*}
        |\psi\rangle=\sum_{s\in\mathbb{F}_2^n}\hat{f}(s)^2|s\rangle.
    \end{equation*}
\end{lemma}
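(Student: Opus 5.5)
The plan is the standard Fourier sampling construction: use a single query in phase-oracle form, sandwiched between two layers of Hadamard gates. One remark on the statement first. Since $\sum_s \hat{f}(s)^4$ is in general not $1$, the vector $\sum_s \hat{f}(s)^2|s\rangle$ is not a unit vector, so the statement should be read as follows. The algorithm produces the state
\begin{equation*}
    |\psi\rangle=\sum_{s\in\mathbb{F}_2^n}\hat{f}(s)|s\rangle,
\end{equation*}
and measuring it in the computational basis returns $s$ with probability $\hat{f}(s)^2$. This is exactly what the later algorithm uses. By Parseval's identity (\cref{thm:parseval}) this $|\psi\rangle$ is normalized, so it is a legitimate quantum state.

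\textbf{Step 1: the query.} Assume the standard oracle $O_f:|x\rangle|b\rangle\mapsto|x\rangle|b\oplus \tfrac{1-f(x)}{2}\rangle$. Prepare $|0^n\rangle|1\rangle$ and apply $H^{\otimes(n+1)}$. This gives $2^{-n/2}\sum_x|x\rangle\otimes|-\rangle$. One application of $O_f$ with the ancilla in $|-\rangle$ acts as a phase kickback $|x\rangle|-\rangle\mapsto f(x)|x\rangle|-\rangle$, since $f$ takes values in $\{\pm1\}$. The ancilla can then be discarded, leaving
\begin{equation*}
    2^{-n/2}\sum_x f(x)|x\rangle.
\end{equation*}

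\textbf{Step 2: the Fourier transform.} Apply $H^{\otimes n}$, using $H^{\otimes n}|x\rangle=2^{-n/2}\sum_s(-1)^{\langle s,x\rangle}|s\rangle$. The amplitude on $|s\rangle$ becomes
\begin{equation*}
    2^{-n}\sum_x f(x)\chi_s(x)=\hat{f}(s),
\end{equation*}
by \cref{def:fourier_character} and \cref{def:fourier_transform}. Measuring then yields $s$ with probability $\hat{f}(s)^2$, and Parseval confirms that these probabilities sum to one.

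There is no real obstacle here. The only points needing care are the sign convention that turns the bit oracle into a $\pm1$ phase oracle via the $|-\rangle$ ancilla, and the normalization issue in the statement noted above.
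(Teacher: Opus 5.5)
Your argument is correct and is the standard Fourier-sampling construction (Hadamard layer, one phase-kickback query, Hadamard layer), which is exactly what the paper takes for granted, since it states this lemma without proof. You are also right that the displayed state should have amplitudes $\hat{f}(s)$ rather than $\hat{f}(s)^2$; that corrected reading, in which measurement yields $s$ with probability $\hat{f}(s)^2$, is how the paper actually uses the lemma.
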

Sampling a string according to the magnitude of the Fourier coefficient simply comes from measuring the state $|\psi\rangle$. The other main lemma we will use, for which we defer the proof later, is as follows:
\begin{lemma}\label{lem:Fourier_mass}
    For any subspace $U$, we have that
    \begin{equation*}
        2\delta_U(f)\leq \wt(\bar{U}).
    \end{equation*}
\end{lemma}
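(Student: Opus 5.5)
We plan to exhibit an explicit boolean function $g \in C_U$ that is close to $f$, namely a rounding of the projection $\Pi_U f$. The key observation is that $\Pi_U f$ is an averaging operator. Write $W = U^\perp$. Then, by the standard identity for projections onto a subspace of characters,
\begin{equation*}
    \Pi_U f(x) = \mathbb{E}_{w \in W}\left[f(x+w)\right],
\end{equation*}
since $\mathbb{E}_{w\in W}[\chi_S(x+w)] = \chi_S(x)$ if $S\in U$ and $0$ otherwise. Consequently $\Pi_U f$ is constant on each coset $x+W$, and it takes values in $[-1,1]$.

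We then define $g(x) = \sgn(\Pi_U f(x))$, breaking ties (value $0$) by assigning, say, $+1$ consistently on each coset. Because $g$ is constant on cosets of $W$, it is a function of $x$ modulo $W$. Hence $g(x+w)=g(x)$ for all $w\in W$, so $\hat g(S)=0$ for every $S\notin W^\perp = U$, and therefore $g \in C_U$. It follows that $\delta_U(f) \le \dist(f,g)$.

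It remains to bound $\dist(f,g)$ coset by coset. Fix a coset and let $p$ be the fraction of its points on which $f$ disagrees with the majority value. The coset value of $\Pi_U f$ is then $a$ with $|a| = 1-2p$, and $g$ agrees with the majority, so the disagreement fraction on this coset is exactly $p = (1-|a|)/2$. On the other hand, since $f$ is $\pm1$-valued,
\begin{equation*}
    \mathbb{E}_{\text{coset}}\left[(f-a)^2\right] = 1 - a^2 = (1-|a|)(1+|a|) \ge 1-|a| = 2p.
\end{equation*}
Averaging over cosets gives $2\dist(f,g) \le \mathbb{E}_x[(f-\Pi_U f)^2]$. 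By Parseval (\cref{thm:parseval}) applied to $f - \Pi_U f = \sum_{S\notin U}\hat f(S)\chi_S$, the right-hand side equals $\wt(\bar U)$. Combining this with $\delta_U(f) \le \dist(f,g)$ yields $2\delta_U(f) \le \wt(\bar U)$.

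The main obstacle is the first step: correctly identifying $\Pi_U f$ as the average of $f$ over cosets of $U^\perp$, and verifying that functions constant on those cosets lie in $C_U$. Once this is in place, the inequality $1-a^2 \ge 1-|a|$ and the tie-breaking case are routine; ties have $p = 1/2$ and the inequality still holds.
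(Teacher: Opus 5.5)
Your proof is correct and follows essentially the same route as the paper: you take $g=\sgn(\Pi_U f)$, which is majority on each coset of $U^\perp$, compute the per-coset disagreement as $(1-|a|)/2$, and compare it with $1-a^2$ using Parseval. The differences are cosmetic. You only need $\delta_U(f)\le\dist(f,g)$, not the paper's claim that the majority function is the closest element of $C_U$. You also compute $\wt(\bar U)$ as $\mathbb{E}_x[(f-\Pi_U f)^2]$ coset by coset, rather than globally as $1-\mathbb{E}_x[(\Pi_U f)^2]$.
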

As a simple corollary of the above, we now get that for any subspace $U$, $f$ will have nontrivial Fourier mass outside that subspace if it is far. Notice that we can rewrite the set $\mathcal{D}_k$ (the set of all functions with Fourier dimension at most $k$) as the union $\mathcal{D}_k=\bigcup_{|U|\leq k}C_U$. This gives the following result.
\begin{corollary}\label{cor:fourier_mass}
    If $\dist(f,\mathcal{D}_k)\geq \epsilon$, then $\wt(\bar{U})\geq 2\epsilon$, for all $|U|\leq k$.
\end{corollary}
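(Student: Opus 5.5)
The plan is to derive Corollary~\ref{cor:fourier_mass} directly from Lemma~\ref{lem:Fourier_mass} together with the decomposition $\mathcal{D}_k=\bigcup_{|U|\leq k}C_U$, where $|U|\leq k$ is read as $\dim U\leq k$. Fix any subspace $U$ with $\dim U\leq k$. The first step is the inclusion $C_U\subseteq\mathcal{D}_k$: any $g\in C_U$ has $\supp(\hat g)\subseteq U$. Since $U$ is a subspace, $\vspan(\supp\hat g)\subseteq U$, and so $\fdim(g)\leq\dim U\leq k$. (Conversely, every $g\in\mathcal{D}_k$ lies in $C_U$ for $U=\vspan(\supp\hat g)$, which justifies the union description, though we only need the inclusion.)

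The second step is monotonicity of distance under set inclusion. Because $C_U\subseteq\mathcal{D}_k$, we get
\begin{equation*}
\delta_U(f)=\dist(f,C_U)=\min_{g\in C_U}\dist(f,g)\geq\min_{g\in\mathcal{D}_k}\dist(f,g)=\dist(f,\mathcal{D}_k)\geq\epsilon.
\end{equation*}
Here distance to a set is taken as the minimum over boolean functions in the set. Both sets are finite and nonempty, since constant functions lie in $C_U$.

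The third step applies Lemma~\ref{lem:Fourier_mass} to this $U$, giving $\wt(\bar U)\geq 2\delta_U(f)\geq 2\epsilon$. Since $U$ was an arbitrary subspace of dimension at most $k$, the claim holds for all such $U$.

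There is essentially no obstacle here; all the real content sits in Lemma~\ref{lem:Fourier_mass}. The only point needing care is the convention that $C_U$ and $\mathcal{D}_k$ consist of $\{\pm1\}$-valued functions, so that $\dist$ is the one from Definition~\ref{def:distance}, and that the corollary's condition ``$|U|\leq k$'' means dimension rather than cardinality.
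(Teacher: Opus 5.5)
Your proposal is correct and follows the paper's argument. The paper notes that $\mathcal{D}_k=\bigcup_{\dim U\leq k}C_U$, which gives $\delta_U(f)\geq\dist(f,\mathcal{D}_k)\geq\epsilon$ for every such $U$, and then applies \cref{lem:Fourier_mass}; you rightly observe that only the inclusion $C_U\subseteq\mathcal{D}_k$ is actually needed.
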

We now exhibit our algorithm, and explain the implications of \cref{cor:fourier_mass}.
\begin{algorithm}[H]
    \caption{Quantum Fourier Dimensionality Tester}\label{alg:quantum_fourier_dim_tester}
    \begin{algorithmic}[1]
    \State Initialize $W=\emptyset$ and $m=O(k)$.
    \For {$i$ in $1,\dots,m$}
        \State Use \cref{lem:fourier_sampling} to produce the state $|\psi\rangle$. ``Mark'' all strings corresponding to $s\not\in W$, and use another $O(1/\sqrt{\epsilon})$ samples to amplitude amplify towards these marked strings.
        \State Measure this state to obtain a string $s$, and set $W=\vspan(W,s)$.
        \State Reject if $\dim(W)> k$.
    \EndFor
    \State Accept if the algorithm reaches this point.
    \end{algorithmic}
\end{algorithm}
We make two comments about the above algorithm. First, is that at a high level, this algorithm relies on \cref{cor:fourier_mass}, which states that as long as the subspace $W$ observed in the algorithm has dimension less than $k$, if $f$ is $\epsilon$-far from having Fourier dimension $k$, then the probability of Fourier sampling obtaining a string outside the subspace $W$ is at least $2\epsilon$. Thus, we can continue sampling strings, and reject if the subspace observed becomes too large. Second, though, is that we do not use \cref{lem:fourier_sampling} directly to produce a sample. This is because if we directly did this, \cref{cor:fourier_mass} implies we would need $O(1/\epsilon)$ samples to find a string which extends the dimension of the subspace $W$, which, if we repeated for $O(k)$ rounds, would give a query complexity of $O(k/\epsilon)$. However, before measuring to produce a sample, we could use amplitude amplification to increase the probability of finding a string which would extend the dimension of $W$. Since these strings initially have probability at least $O(\epsilon)$, by \cref{cor:fourier_mass}, this means we need $O(1/\sqrt{\epsilon})$ samples to amplitude amplify onto these states. Once this is done, we will find a string $s$ that extends the subspace $W$, meaning this saves a factor of $\sqrt{\epsilon}$. To formalize this, we show the following:
\begin{theorem}\label{thm:quantum_tester}
    \cref{alg:quantum_fourier_dim_tester} is a one sided quantum tester for Fourier dimensionality that uses $O(k/\sqrt{\epsilon})$ quantum queries to a function $f$ and:
    \begin{itemize}
        \item Accepts with probability $1$ if $\fdim(f)\leq k$.
        \item Rejects with probability $2/3$ if $f$ is $\epsilon$-far from having Fourier dimension $k$.
    \end{itemize}
\end{theorem}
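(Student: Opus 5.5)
We prove the two bullets separately and then count queries.

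\emph{Completeness.} If $\fdim(f)\leq k$, set $U=\vspan(\supp\hat f)$, so $\dim U\leq k$. Every measured string $s$ has nonzero amplitude only if $\hat f(s)\neq 0$. Amplitude amplification only reweights basis states that already carry amplitude, so every measured $s$ lies in $\supp\hat f\subseteq U$. By induction $W\subseteq U$ throughout, hence $\dim W\leq k$ always. The algorithm therefore never rejects, and it accepts with probability $1$.

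\emph{Soundness.} Suppose $\dist(f,\mathcal{D}_k)\geq\epsilon$. Take a round that begins with $\dim W\leq k$. Applying \cref{cor:fourier_mass} to $U=W$ gives
\begin{equation*}
    p:=\sum_{s\notin W}\hat f(s)^2\geq 2\epsilon.
\end{equation*}
Let $\mathcal{A}$ be the one-query state preparation of \cref{lem:fourier_sampling}; its measurement distribution is $\hat f(s)^2$. The marking oracle reflects on the set $\{s\notin W\}$. This oracle costs no queries, because $W$ is a classically known subspace and membership can be tested by a linear map on the register.

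We then run amplitude amplification with $O(1/\sqrt{\epsilon})$ applications of $\mathcal{A}$ and $\mathcal{A}^{-1}$. Since $p$ is unknown and only lower bounded, we use the standard fixed-point or randomized-iteration variant of Brassard--H\o yer--Mosca--Tapp. This yields a marked string with probability at least some constant $c>0$, say $c\geq 1/2$.

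Each such success strictly increases $\dim W$. A failure still leaves $W$ a subspace of dimension at most $k$, so the corollary applies again in the next round. The algorithm rejects as soon as it has had $k+1$ successes.

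With $m=Ck$ rounds, the number of successes stochastically dominates $\mathrm{Bin}(m,c)$. This holds because each round succeeds with probability at least $c$ conditioned on the past, as long as the algorithm has not yet rejected. Choosing $C$ large, a Chernoff bound (or just Markov on failures) gives
\begin{equation*}
    \Pr[\mathrm{Bin}(Ck,c)<k+1]\leq 1/3.
\end{equation*}
Hence the algorithm rejects with probability at least $2/3$.

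\emph{Query count.} There are $O(k)$ rounds, each using $O(1/\sqrt{\epsilon})$ queries, for a total of $O(k/\sqrt{\epsilon})$.

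\emph{Main obstacle.} The delicate point is the amplitude amplification step, since the marked weight $p$ is unknown: it is at least $2\epsilon$ but could be much larger. A fixed number of Grover iterations could then overshoot. I would handle this with the exponential/randomized search schedule of BHMT, or with fixed-point amplitude amplification. Either gives constant success probability with $O(1/\sqrt{\epsilon})$ queries whenever $p\geq 2\epsilon$. Both are standard, so the proof cites one as a black box.

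A minor technical point: \cref{lem:fourier_sampling} as written shows $\hat f(s)^2$ as amplitudes. I would use the correct state $\sum_s\hat f(s)|s\rangle$, whose measurement distribution is $\hat f(s)^2$. The argument above only needs that the support and the marked weight are as described.
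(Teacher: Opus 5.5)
Your proof is correct and takes the same route as the paper. Completeness holds because every measured string lies in $\supp\hat f$. Soundness comes from applying \cref{cor:fourier_mass} to the current $W$ in each of $O(k)$ rounds of amplitude amplification with $O(1/\sqrt{\epsilon})$ queries each. Your version is more careful than the paper's on two points: the paper tacitly assumes plain amplitude amplification works even though the marked weight $p\geq 2\epsilon$ is unknown, and it only sketches the round-counting. One small slip is in the parenthetical ``Markov on failures'': it bounds the failure probability only by roughly $1-c$, which is not below $1/3$ when $c=1/2$. Rely instead on the Chernoff bound (with $m=C(k+1)$ rounds, so that $k=0$ is covered), or apply Markov to the number of rounds needed to collect $k+1$ successes, whose expectation is at most $(k+1)/c$.
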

We will prove this, assuming the truth of \cref{lem:Fourier_mass} for now.
\begin{proof}
    We start by proving the acceptance first. Trivially, if $\fdim(f)\leq k$, the above sampling algorithm will never find a subspace $W$ of dimension greater than $k$, and thus will always reach the end of the algorithm and accept. Now, assume that $f$ is $\epsilon$-far from having Fourier dimension $k$. Then, by \cref{cor:fourier_mass}, at any point in the algorithm, there is Fourier mass of size at least $2\epsilon$ outside the current subspace $W$. Thus, starting with $|\psi\rangle$, the state where each string appears proportional to the magnitude of its Fourier coefficients, with $O(1/\sqrt{\epsilon})$ queries, amplitude amplification followed by measurement produces a string $s$ not in the span of the current $W$, thus expanding the dimension of $W$. Then, repeating this process for at least $k+1$ rounds, the algorithm will find, with high probability, a subspace $W$ of dimension $k+1$. The probability can be adjusted to be $2/3$ by adjusting the constants inside $m=O(k)$, the number of rounds, and $O(1\sqrt{\epsilon})$, the number of samples used for amplitude amplification. Thus, if $f$ is far from having Fourier dimension $k$, the tester will find a subspace of dimension $k+1$ with probability at least $2/3$, and thus reject. This tester uses $O(k)O(1/\sqrt{\epsilon})=O(k/\sqrt{\epsilon})$ queries. 
\end{proof}
\subsection{Proof of \cref{lem:Fourier_mass}}
In this section, we conclude by proving \cref{lem:Fourier_mass}, and thus by implication \cref{cor:fourier_mass}. We start with a result by Gopalan \etal~, which we include a proof of for completeness \cite[Fact 11]{DBLP:journals/siamcomp/GopalanOSSW11}.
\begin{lemma}\label{lem:perp_averaging}
    $\Pi_Uf(x)=\mathbb{E}_{h\in U^\perp}[f(x+h)]$.
\end{lemma}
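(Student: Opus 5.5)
The plan is to expand the right-hand side in the Fourier basis and use the standard character-sum identity over a subspace. Fix $x$ and write $f(x+h)=\sum_{S}\hat{f}(S)\chi_S(x+h)$. Since $\chi_S(x+h)=(-1)^{\langle S,x\rangle+\langle S,h\rangle}=\chi_S(x)\chi_S(h)$ and expectation is linear, we get
\begin{equation*}
    \mathbb{E}_{h\in U^\perp}[f(x+h)]=\sum_{S\subseteq[n]}\hat{f}(S)\chi_S(x)\,\mathbb{E}_{h\in U^\perp}[\chi_S(h)].
\end{equation*}
Thus everything reduces to computing $\mathbb{E}_{h\in U^\perp}[(-1)^{\langle S,h\rangle}]$ for each $S$.

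The key step is the identity $\mathbb{E}_{h\in U^\perp}[(-1)^{\langle S,h\rangle}]=\mathbbm{1}[S\in U]$, which I would prove in two cases.

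\emph{Case $S\in (U^\perp)^\perp$.} Then $\langle S,h\rangle=0$ for every $h\in U^\perp$, so the average is $1$.

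\emph{Case $S\notin (U^\perp)^\perp$.} Then the map $h\mapsto\langle S,h\rangle$ is a nonzero linear functional on the subspace $U^\perp$. Its kernel therefore has index $2$, so exactly half of $U^\perp$ gives $+1$ and half gives $-1$. Equivalently, pick $h_0\in U^\perp$ with $\langle S,h_0\rangle=1$ and pair $h$ with $h+h_0$, which is a bijection of $U^\perp$. Either way the average is $0$.

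It remains to use that $(U^\perp)^\perp=U$ over $\mathbb{F}_2^n$, which holds by the dimension count $\dim U^\perp=n-\dim U$ and $U\subseteq (U^\perp)^\perp$. This is the only point requiring care, since over $\mathbb{F}_2$ the form is degenerate on subspaces ($U\cap U^\perp$ may be nonzero), but the double-perp identity still holds for the standard nondegenerate form on the whole space.

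Substituting back, the sum collapses to $\sum_{S\in U}\hat{f}(S)\chi_S(x)$, which is $\Pi_Uf(x)$ by definition. I expect no real obstacle beyond stating the character-orthogonality step cleanly.
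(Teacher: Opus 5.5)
Your proposal is correct and follows essentially the same route as the paper: Fourier-expand $f(x+h)$, factor $\chi_S(x+h)=\chi_S(x)\chi_S(h)$, and show $\mathbb{E}_{h\in U^\perp}[\chi_S(h)]$ equals $1$ for $S\in U$ and $0$ otherwise by pairing $h$ with $h+h_0$. Your explicit use of $(U^\perp)^\perp=U$ to guarantee that such an $h_0$ exists inside $U^\perp$ adds rigor at a step the paper leaves implicit.
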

\begin{proof}
By taking the Fourier transform,
    \begin{align*}
        f(x+h)&=\sum_{S\in\mathbb{F}_2^n}\hat{f}(S)\chi_S(x+h)\\\
        &=\sum_{S\in\mathbb{F}_2^n}\hat{f}(S)\chi_S(x)\chi_S(h).
    \end{align*}
    Then taking expectations, 
    \begin{equation*}
        \mathbb{E}_{h\in U^\perp}[f(x+h)]=\sum_{S\in\mathbb{F}_2^n}\hat{f}(S)\chi_S(x)\mathbb{E}_{h\in U^\perp}[\chi_S(h)].
    \end{equation*}
    We now consider some cases for $S$ and the expectation. First, if $S\in U$, then we have $\langle S, h\rangle=0$, which implies $\chi_S(h)=1$, and thus $\mathbb{E}_{h\in U^\perp}[\chi_S(h)]=1$. On the other hand if $S\not\in U$, then we can find some vector $h_0$ such that $\langle S, h_0\rangle=1$. Then, we can pair up all vectors in $v\in U^\perp$ by cosets $\{v, v+h_0\}$. Then, we evaluate $\chi_S(\cdot)$ on each term in this pair. The former gives $\chi_S(v)$, whereas the latter gives $\chi_S(v+h_0)=\chi_S(v)\chi_S(h_0)=-\chi_S(v)$. Hence, each pair zeroes out, meaning the expectation also evaluates to zero, $\mathbb{E}_{h\in U^\perp}[\chi_S(h)]=0$. This then gives that
    \begin{align*}
        \mathbb{E}_{h\in U^\perp}[f(x+h)]&=\sum_{S\in U}\hat{f}(S)\chi_S(x)\\
        &=\Pi_U f(x).
    \end{align*}
\end{proof}
This implies that for any subspace $U$, the value of $\Pi_U f(x)$ can be interpreted as the average value of $f$ on the coset $x+U^\perp$. The second lemma we prove is that if $f\in C_U$ (that is, the support of the Fourier coefficients of $f$ lies in $U$), then $f$ is constant on all cosets of $U^\perp$.
\begin{lemma}\label{lem:constant_cosets}
     $f\in C_U$, if and only if for any $h\in U^\perp$,
    \begin{equation*}
        f(x+h)=f(x).
    \end{equation*}
\end{lemma}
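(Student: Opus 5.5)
We prove the two directions separately, working throughout with the Fourier expansion of \cref{def:fourier_transform} and the character identity $\chi_S(x+h)=\chi_S(x)\chi_S(h)$ already used in the proof of \cref{lem:perp_averaging}.

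\emph{Forward direction.} Suppose $f\in C_U$, so $\hat{f}(S)=0$ whenever $S\notin U$. Fix any $h\in U^\perp$. Then
\begin{equation*}
    f(x+h)=\sum_{S\in U}\hat{f}(S)\chi_S(x)\chi_S(h).
\end{equation*}
For every $S\in U$ and $h\in U^\perp$ we have $\langle S,h\rangle=0$, so $\chi_S(h)=1$. Each term therefore reduces to $\hat{f}(S)\chi_S(x)$, and the sum equals $f(x)$. An equivalent route is to note that $f=\Pi_U f$ when $f\in C_U$. By \cref{lem:perp_averaging}, $f(x)$ is then the average of $f$ over the coset $x+U^\perp$. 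Since $f$ takes values in $\{\pm1\}$, an average of $\pm1$ values equal to $\pm1$ forces $f$ to be constant on the coset. Either argument suffices.

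\emph{Reverse direction.} Suppose $f(x+h)=f(x)$ for all $x$ and all $h\in U^\perp$. Averaging over $h\in U^\perp$ and applying \cref{lem:perp_averaging} gives
\begin{equation*}
    \Pi_U f(x)=\mathbb{E}_{h\in U^\perp}[f(x+h)]=f(x).
\end{equation*}
So $f=\Pi_U f$ as functions. Uniqueness of the Fourier expansion (the characters form an orthonormal basis) then lets us compare coefficients. This gives $\hat{f}(S)=0$ for every $S\notin U$, i.e. $f\in C_U$.

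If one prefers to avoid \cref{lem:perp_averaging}, one can argue directly. Fix $S\notin U$. Since $U=(U^\perp)^\perp$ over $\mathbb{F}_2^n$, there is some $h_0\in U^\perp$ with $\langle S,h_0\rangle=1$. Substituting $x\mapsto x+h_0$ in $\hat{f}(S)=\mathbb{E}_x[f(x)\chi_S(x)]$ and using $f(x+h_0)=f(x)$ gives $\hat{f}(S)=-\hat{f}(S)$, hence $\hat{f}(S)=0$.

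The only step that needs any care is this identity $(U^\perp)^\perp=U$, which supplies the existence of $h_0$. It is the standard finite-dimensional duality over $\mathbb{F}_2$, from $\dim U+\dim U^\perp=n$, and it is already implicitly used in \cref{lem:perp_averaging}. Beyond that the argument is routine.
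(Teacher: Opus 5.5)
Your proof is correct and follows the paper's argument. The forward direction uses the Fourier expansion supported on $U$ together with $\chi_S(h)=1$ for $h\in U^\perp$, and the reverse direction uses \cref{lem:perp_averaging} to get $\Pi_U f=f$; your appeal to uniqueness of Fourier coefficients, and the alternative $\hat{f}(S)=-\hat{f}(S)$ argument via $(U^\perp)^\perp=U$, just fill in the step the paper compresses into ``which implies $f\in C_U$''.
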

\begin{proof}
To show the forward direction, we explicitly write out the Fourier transform:
    \begin{align*}
        f(x)&=\sum_{S\in\mathbb{F}_2^n}\hat{f}(S)\chi_S(x)\\
        &=\sum_{S\in U}\hat{f}(S)\chi_S(x).
    \end{align*}
    Now, for any $h\in U^\perp$, a similar calculation to the above will show $\chi_S(x+h)=\chi_S(x)$, giving the result as required.

Next, to show the reverse direction, by \cref{lem:perp_averaging}, if $f(x+h)=f(x)$ for all $h\in U^\perp$, then we immediately get
\begin{equation*}
    \Pi_Uf(x)=f(x),
\end{equation*}
which implies that $f\in C_U$, as required.
\end{proof}
With these in hand, we will now prove \cref{lem:Fourier_mass}.
\begin{proof}[Proof of \cref{lem:Fourier_mass}]
    First, note that the closest $g\in C_U$ to $f$ is the one that on every coset of $U^\perp$, is equal to the majority value of $f$. Let this function be $g_U(x)$. Then, we have that formally
    \begin{equation*}
        g_U(x)=\sgn\left(\mathbb{E}_{h\in U^\perp}[f(x+h)]\right),
    \end{equation*}
    where if the expectation is $0$, we will assign $g_U(x)=1$. Note by \cref{lem:perp_averaging}, this implies that
    \begin{equation*}
        g_U(x)=\sgn\left(\Pi_U f(x)\right).
    \end{equation*}
    This also implies that $\delta_U(f)=\dist(f,g_U)$. By construction, this also implies that on every coset of $U^\perp$, $f$ and $g_U$ disagree on a minority of the values. Assume they disagree on a $q$ fraction, where $0\leq q\leq 1/2$. We then have that, by taking an absolute value of the expectation
    \begin{align*}
        |\Pi_U f(x)|&=\left|\mathbb{E}_{h\in U^\perp}[f(x+h)]\right|\\
        &=2q-1\\
        q&=\frac{1-|\Pi_U f(x)|}{2}.
    \end{align*}
    Hence, 
    \begin{equation*}
        \delta_U(f)=\frac{1}{2}\left(1-\mathbb{E}_x[|\Pi_U f(x)]|\right).
    \end{equation*}
    On the other hand, by Parseval's theorem, we have that
    \begin{align*}
        \wt(\bar{U})&=1-\wt(U)\\
        &=1-\sum_{S\in U}\hat{f}(S)^2\\
        &=1-\sum_{S\in U} \widehat{\Pi_U f}(S)^2\\ 
        &=1-\sum_{S\in[n]}\widehat{\Pi_U f}(S)^2\\ 
        &=1-\mathbb{E}_x[(\Pi_Uf(x))^2].
    \end{align*}
    The second to last line we get by recalling the projector is zero outside of $U$, and hence we can add those terms back in without changing the sum, and the last equality we get via Parseval's theorem. Recall that for any $|t|\leq 1$, we have that $1-|t|\leq 1 - t^2$. Then, since $|\Pi_U f(x)|\leq 1$, we have that
    \begin{align*}
        1-|\Pi_U f(x)|&\leq 1-|\Pi_U f(x)|^2\\
        1-\mathbb{E}_x[|\Pi_U f(x)|]&\leq 1-\mathbb{E}_x[|\Pi_U f(x)|^2]\\ 
        2\delta_U(f)&\leq \wt(\bar{U}).
    \end{align*}
\end{proof}
\subsection{An $\Omega(k)$ Lower Bound for Testing Fourier Dimensionality}\label{ss:lower_bound}
We conclude this discussion by proving an $\Omega(k)$ lower bound for testing Fourier Dimensionality, showing \cref{alg:quantum_fourier_dim_tester} is essentially tight.
\begin{theorem}\label{thm:quantum_lower_bound}
    Fix $\epsilon < 1/4$. Any algorithm which distinguishes between a boolean function $f$ having Fourier degree at most $k$, or being $\epsilon$-far from being so requires at least $\Omega(k)$ quantum queries.
\end{theorem}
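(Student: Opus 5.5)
The plan is to reduce from Simon's problem on $k+1$ bits, composing the Simon oracle with a random boolean function so that both cases become instances of the Fourier dimensionality problem. (Here ``Fourier degree'' in the statement is read as ``Fourier dimension'', as in \cref{def:fourier_dim_testing}.) Let $g:\mathbb{F}_2^{k+1}\to R$ be a Simon instance. Either $g$ is one-to-one, or there is a nonzero $s$ with $g(x)=g(y)\iff y\in\{x,x+s\}$. By Koiran \etal~\cite{DBLP:conf/icalp/KoiranNP05}, deciding which case holds with probability $2/3$ requires $\Omega(k)$ quantum queries. I would use the decision version of their bound, or note that the standard hybrid/polynomial-method argument for Simon gives the same bound for distinguishing. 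Fix a uniformly random $h:R\to\{-1,1\}$ and set $f=h\circ g$, padded to $n$ variables by ignoring the extra coordinates if desired. One query to $f$ can be simulated with two queries to $g$ (compute, apply $h$ in the phase or output register, uncompute). So a $q$-query tester for Fourier dimension $k$ yields an $O(q)$-query Simon distinguisher, once we check that the two cases map correctly.

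\emph{Periodic case.} If $g(x+s)=g(x)$ then $f(x+s)=f(x)$ for all $x$. Take $U=s^\perp$, which has dimension $k$ and satisfies $U^\perp=\{0,s\}$. By \cref{lem:constant_cosets}, $f\in C_U$, so $\fdim(f)\le k$ and the tester accepts with probability at least $2/3$.

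\emph{One-to-one case.} Here $f$ is a uniformly random function on $\mathbb{F}_2^{k+1}$, and I would show it is $\epsilon$-far from $\mathcal{D}_k$ with probability $1-o(1)$. Since $\mathcal{D}_k=\bigcup_{\dim U\le k}C_U$, and every such $U$ lies in a hyperplane $s^\perp$ with $C_U\subseteq C_{s^\perp}$, it suffices to lower bound $\delta_{s^\perp}(f)$ for each of the $2^{k+1}-1$ nonzero $s$. By the proof of \cref{lem:Fourier_mass}, the closest function in $C_{s^\perp}$ takes the majority value on each coset $\{x,x+s\}$. Hence $\delta_{s^\perp}(f)=2^{-(k+1)}\cdot\#\{x: f(x)\ne f(x+s)\}$, which equals $\tfrac12$ times the fraction of the $2^k$ pairs that disagree. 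Each pair disagrees independently with probability $1/2$, so the expectation is $1/4$. A Chernoff bound gives $\Pr[\delta_{s^\perp}(f)<\epsilon]\le e^{-\Omega((1/4-\epsilon)^2 2^k)}$, and a union bound over $2^{k+1}$ choices of $s$ still leaves failure probability $o(1)$ for fixed $\epsilon<1/4$. This is exactly where the hypothesis $\epsilon<1/4$ enters. In this case the tester rejects with probability at least $2/3-o(1)$.

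Averaging over $h$, the composed algorithm distinguishes the two Simon cases with constant advantage, and amplification boosts this to $2/3$. Therefore $q=\Omega(k)$. I expect the main obstacle to be matching the exact form of the Simon lower bound used: ensuring it applies to the decision problem, and to oracles whose range $R$ can be taken large enough (e.g.\ $R=\mathbb{F}_2^{k+1}$) that composing with a random $h$ is well defined. The distance computation itself is routine once \cref{lem:constant_cosets} and the majority characterization from the proof of \cref{lem:Fourier_mass} are in hand.
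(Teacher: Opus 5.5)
Your proposal is correct and is essentially the paper's proof. It uses the same reduction from Simon's problem on $k+1$ bits via a uniformly random mask, \cref{lem:constant_cosets} for the periodic case, and a Hoeffding bound plus a union bound over the $2^{k+1}$ candidate periods for the injective case; your hyperplane observation $C_U\subseteq C_{s^\perp}$ also makes explicit the step from ``far from every periodic function'' to ``far from $\mathcal{D}_k$'' that the paper leaves implicit. One small slip: $\#\{x: f(x)\neq f(x+s)\}$ counts each disagreeing pair twice, so the prefactor should be $2^{-(k+2)}$ rather than $2^{-(k+1)}$, though your verbal version ($\tfrac12$ times the fraction of disagreeing pairs) and the resulting mean $1/4$ are correct.
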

We will prove this via reduction to a lower bound in Simon's problem. We start by defining this problem.
\begin{definition}[Simon's Problem]\label{def:Simons}
    We are given (quantum) query access a function $g:\mathbb{F}_2^m\rightarrow\mathbb{F}_2^m$ that is promised to satisfy one of the following:
    \begin{enumerate}
        \item $g$ is injective.
        \item $g$ is periodic. That is, there is some nonzero string $s$ such that $g(x)=g(x+s)$.
    \end{enumerate}
    The goal is for the quantum algorithm to distinguish between these two cases, with high probability.
\end{definition}
Koiran \etal~gave a lower bound for this problem, which we will use as a black box.
\begin{theorem}[{\cite[Theorem 1 (Paraphrased)]{DBLP:conf/icalp/KoiranNP05}}]\label{thm:Simons_lb}
    Any algorithm that distinguishes between the two cases of Simon's problem with probability greater than $2/3$ requires at least $\Omega(m)$ queries.
\end{theorem}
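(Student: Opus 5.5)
The plan is to prove the bound with the polynomial method, following Koiran \etal's strategy. I would \emph{generalise} Simon's problem to a family of hidden-subgroup instances and reduce to a degree bound for a univariate polynomial sampled on the geometric grid $1,2,4,\dots,2^m$. Fix a $T$-query algorithm $\mathcal{A}$. Encode $g$ by the Boolean variables $z_{x,y}=[g(x)=y]$. By the standard argument of Beals \etal, the acceptance probability of $\mathcal{A}$ is a real polynomial of degree at most $2T$ in the $z_{x,y}$. Each monomial is an indicator that $g$ satisfies a set of at most $2T$ input--output constraints $g(x_i)=y_i$.

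Next I would symmetrise over random instances. For $0\leq d\leq m$, let $\mathcal{G}_d$ be the distribution of $g$ obtained by choosing a uniformly random subgroup $H\leq\mathbb{F}_2^m$ of dimension $d$. We then set $g$ to be constant on cosets of $H$, with distinct values on distinct cosets given by a uniformly random injection from cosets to the range. To keep the output-labelling factor tame, I would take the range to be large, say $\mathbb{F}_2^{M}$ with $M\gg m$; enlarging the range only helps the algorithm. Here $d=0$ is exactly a random injective instance, and every $d\geq1$ is a valid periodic instance. So a correct algorithm has $P(0)\leq 1/3$ and $P(d)\geq 2/3$ for all $d\geq 1$, where $P(d)$ is the acceptance probability averaged over $\mathcal{G}_d$.

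The key computation is to write $P(d)$ as a polynomial in $q=2^d$ of degree $O(T)$, up to a controllable error. For one monomial with inputs $x_1,\dots,x_r$ ($r\leq 2T$), the constraint holds iff two conditions are met. First, the partition of the $x_i$ into cosets matches the pattern of equal outputs. Second, the labels match. The probability that a fixed set of difference vectors spanning dimension $s$ lies in $H$ is $\prod_{i<s}\frac{2^d-2^i}{2^m-2^i}$, which is a polynomial in $q$ of degree $s\leq 2T$. The probability that a set of vectors avoids $H$ is an inclusion--exclusion combination of such terms. The label-matching probability depends only on the number of distinct outputs, which is fixed by the monomial, and not on $d$, since the range is large. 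Hence $P(d)=p(2^d)$ for a real polynomial $p$ of degree at most $2T$, with possibly a small additive error. In addition, $p(2^j)\in[0,1]$ for $j=0,\dots,m$, $p(1)\leq1/3$, and $p(2)\geq2/3$.

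The final step, and the one I expect to be the main obstacle, is a lemma about polynomials on a geometric grid. Suppose $p$ has degree $D$, satisfies $|p(2^j)|\leq1$ for $0\leq j\leq m$, and has $|p(2)-p(1)|\geq1/3$. Then $D=\Omega(m)$. I would try Lagrange interpolation through the $D+1$ largest grid points $2^{m-D},\dots,2^m$ and use it to bound $|p(1)-p(2)|$. The Lagrange basis values at the small points $1$ and $2$ are products of ratios $\frac{1-2^{a}}{2^{b}-2^{a}}$. The target is to show that, for $D\leq cm$, the interpolant is nearly constant on $[1,2]$, with variation of order $2^{-(m-D)}\cdot 2^{O(D)}$. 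This would contradict the gap of $1/3$ and force $T\geq D/2=\Omega(m)$. If the product bounds turn out too lossy, my fallback is the $q$-analogue approach: apply the Markov--Bernstein-type inequality for the operator $p(x)\mapsto(p(2x)-p(x))/x$, iterating it on dyadic scales.
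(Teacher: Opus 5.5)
The paper gives no proof of this statement. It imports the bound from Koiran, Nesme and Portier and uses it as a black box. Your polynomial-method argument is therefore a genuinely different, self-contained route, and it is sound. It buys independence from the reference and an explicit constant; the citation buys brevity.

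Two small corrections to the setup. First, the symmetrisation is exact, not approximate. Inclusion--exclusion together with $\Pr[W\subseteq H]=\prod_{i<s}\frac{2^d-2^i}{2^m-2^i}$ makes the coset-pattern probability of each monomial a polynomial in $q=2^d$ of degree at most $r-1$. This formula is correct at every $d\in\{0,\dots,m\}$, so no additive error term arises. Second, you do not need to enlarge the range, and your justification for doing so points the wrong way. Instances with range $\mathbb{F}_2^m$ embed into those with range $\mathbb{F}_2^M$, so a lower bound for the larger range does not transfer back to the problem as stated. With range exactly $\mathbb{F}_2^m$, a uniformly random injection from the $2^{m-d}$ cosets sends any $c$ fixed cosets to a uniformly random ordered $c$-tuple of distinct values. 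The label factor is then $1/\bigl(2^m(2^m-1)\cdots(2^m-c+1)\bigr)$, already independent of $d$. Also note that your final step only uses that $\mathcal{G}_0$ and $\mathcal{G}_1$ are valid instances. So the argument covers the standard exactly-2-to-1 promise as well as the paper's weaker one.

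The step you flagged as the main obstacle goes through with your first plan, and without any $2^{O(D)}$ loss. Take nodes $t_j=2^j$ for $a\le j\le m$ with $a=m-D\ge1$ (if $D\ge m$ there is nothing to prove). Write $L_k(x)=L_k(0)\prod_{j\ne k}(1-x/t_j)$. For $x\in[1,2]$ every factor lies in $[0,1]$, so the mean value theorem gives $|L_k(2)-L_k(1)|\le|L_k(0)|\sum_{j\ge a}2^{-j}\le 2^{1-a}|L_k(0)|$. Moreover,
\[
|L_k(0)|=\prod_{j>k}\frac{1}{1-2^{k-j}}\prod_{a\le j<k}\frac{2^{j-k}}{1-2^{j-k}}\le C^2\,2^{-(k-a)(k-a+1)/2},\qquad C=\prod_{i\ge1}\frac{1}{1-2^{-i}},
\]
which is summable in $k$. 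Hence $|p(2)-p(1)|\le\sum_k|L_k(2)-L_k(1)|=O(2^{D-m})$. A gap of $1/3$ therefore forces $D\ge m-O(1)$, i.e.\ $T\ge (m-O(1))/2$, and the $q$-Markov fallback is unnecessary.
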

Koiran \etal~actually gave an explicit bound in terms of the lower order terms and the coefficients, but this loose $\Omega(m)$ bound suffices for our purposes. The intuition for our proof is as follows: Notice the second case, where $g$ is periodic, essentially coincides with $g$ being constant over a subspace of dimension $1$ (that is, the dimension of the subspace spanned by $s$ alone). In our prior notation, this implies $U^\perp=\vspan(s)$, and thus $g$ has Fourier dimension at most $m-1$, by \cref{lem:constant_cosets}. Then, if we had a Fourier dimension tester as a black box subroutine (ignoring for now, the dependence on the distance parameter $\epsilon$), we could just test what the Fourier dimension of $g$ is, and if it is at most $m-1$, we know it is periodic, otherwise it is injective. This works as an intuition, but there are a couple of preprocessing steps needed first to make this work. First, the function in Simon's problem maps strings to $\mathbb{F}_2^m$, but our Fourier dimension tester only works on boolean functions. However, this is not an issue, our Fourier dimension tester can just sample a uniformly random function $M:\mathbb{F}_2^m\rightarrow\{\pm 1\}$, and apply that to $g$. In other words, we will perform the following:
\begin{enumerate}
    \item Take $g$ as in \cref{def:Simons}.
    \item Sample a uniformly random function $M:\mathbb{F}_2^m\rightarrow\{\pm1\}$. We denote this $M$ since it serves to ``mask'' the original Simon's function.
    \item Compose the two to create the function $f(x)=M(g(x))$. Notice now that $f:\mathbb{F}_2^m\rightarrow \{\pm1\}$. 
\end{enumerate}
Now, we can test the Fourier dimension of $f$. If g is periodic, then clearly $f$ is also periodic, since
\begin{equation*}
    f(x)=M(g(x))=M(g(x+s))=f(x+s),
\end{equation*}
and thus for the same reason, $\fdim(f)\leq m-1$. On the other hand, assume $g$ is injective. Then, $f$ is just a random function from $\mathbb{F}_2^m\rightarrow\{\pm1\}$. We can show that with high probability, a random function is far from having Fourier dimension $m-1$.
\begin{lemma}\label{lem:random_fourier_dimension}
    Fix the distance parameter $\epsilon < 1/4$. Then, with probability $1-o(1)$ over the choice of $M$, we have that $f$ is $\epsilon$-far from having Fourier dimension $k$, and the $o(1)$ parameter depends on $\epsilon$.
\end{lemma}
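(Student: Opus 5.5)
Here $k=m-1$ and $f=M$ is a uniformly random function on $\mathbb{F}_2^m$. The plan is a union bound over all subspaces $U$ of dimension at most $m-1$: we show that for each fixed $U$ it is very unlikely that $\delta_U(f)<\epsilon$.

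Since $\mathcal{D}_{m-1}=\bigcup_{\dim U\le m-1}C_U$ and $C_U\subseteq C_{U'}$ whenever $U\subseteq U'$, it suffices to consider the hyperplanes $U$ of dimension exactly $m-1$. There are $2^m-1$ of these, one for each nonzero $s$ with $U^\perp=\{0,s\}$.

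\textbf{Fixed hyperplane.} Fix a hyperplane $U$ with $U^\perp=\{0,s\}$. By \cref{lem:constant_cosets}, functions in $C_U$ are exactly those constant on every pair $\{x,x+s\}$, and the proof of \cref{lem:Fourier_mass} shows that the closest such function takes the majority value on each coset. On a pair of size two, $f$ incurs one error exactly when $f(x)\neq f(x+s)$, and otherwise none. Hence
\begin{equation*}
\delta_U(f)=2^{-m}\,\bigl|\{\text{pairs }\{x,x+s\}: f(x)\ne f(x+s)\}\bigr|.
\end{equation*}
For a uniformly random $f$ the $2^{m-1}$ pairs are disjoint, so the indicators $Z_i=[f(x)\ne f(x+s)]$ are i.i.d.\ Bernoulli$(1/2)$. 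Therefore $\delta_U(f)=2^{-m}\sum_{i=1}^{2^{m-1}}Z_i$, which has mean $1/4$.

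\textbf{Concentration and union bound.} We have
\begin{equation*}
\Pr[\delta_U(f)<\epsilon]=\Pr\Bigl[\textstyle\sum Z_i<\epsilon 2^m\Bigr],
\end{equation*}
and $\epsilon 2^m=2\epsilon\cdot 2^{m-1}$ with $2\epsilon<1/2$. Hoeffding's inequality bounds this by $\exp\!\left(-2\cdot 2^{m-1}(1/2-2\epsilon)^2\right)=\exp\!\left(-2^m(1/2-2\epsilon)^2\right)$. A union bound over the $2^m-1$ hyperplanes gives a failure probability of at most $2^m\exp\!\left(-2^m(1/2-2\epsilon)^2\right)=o(1)$. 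The rate depends on $\epsilon$ through $(1/2-2\epsilon)^2$, which is exactly why the statement requires $\epsilon<1/4$.

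\textbf{Main obstacle.} The step needing the most care is the identification of $\delta_U(f)$ with the normalized count of disagreeing pairs. This relies on the majority-rule characterization from the proof of \cref{lem:Fourier_mass}, applied to cosets of size two, where a tie means exactly one disagreement. After that, independence across disjoint pairs makes the Hoeffding step routine.
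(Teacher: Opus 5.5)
Your proof is correct and follows the paper's argument. You fix a nonzero $s$ (equivalently a hyperplane $U$ with $U^\perp=\{0,s\}$) and identify the distance to $C_U$ with $2^{-m}$ times the number of disagreeing pairs $\{x,x+s\}$, which is distributed as $\bin(2^{m-1},1/2)$. You then apply Hoeffding and take a union bound over the $2^m-1$ choices of $s$. Your reduction to hyperplanes via $C_U\subseteq C_{U'}$ and \cref{lem:constant_cosets}, which the paper leaves implicit, and your factor-of-two sharper Hoeffding exponent are only cosmetic differences.
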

\begin{proof}
    Fix some nonzero period $s$, which decomposes the domain into $2^{m-1}$ pairs, $\{x,x+s\}$. To make $f$ periodic, we require that each pair evaluates to the same bit. For a uniformly random $f$, each pair is equal with probability $1/2$. Let $B_s$ be the number of differing pairs for the fixed $s$. This is distribtued binomially, since disjoint pairs are independent, and hence we have $B_s\sim\bin(2^{m-1},1/2)$. For each of these differing pairs, one value needs to be altered so that $f(x)=f(x+s)$, which implies that
    \begin{equation*}
        \dist(f,\{\text{functions with period $s$}\})=\frac{B_s}{2^m}.
    \end{equation*}
    By a Hoeffding bound, using that $\mathbb{E}[B_s]=2^{m-2}$, we get that
    \begin{align*}
        \Pr[B_s\leq\epsilon 2^m]&\leq \exp\left(-\frac{(1-4\epsilon)^2}{2}2^{m-2}\right)\\
        \Pr\left[\frac{B_s}{2^m}\leq\epsilon\right]&\leq \exp(-c_\epsilon 2^m).
    \end{align*}
    Now, taking a union bound over all possible choices of $s$:
    \begin{equation*}
        \Pr[f\text{ is }\epsilon-\text{close to some periodic function}]\leq 2^m\exp(-c_\epsilon 2^m)=o(1).
    \end{equation*}
    Hence, with probability $1-o(1)$, a random function is $\epsilon$-far from a periodic function, meaning with probability $1-o(1)$ over the choice of $M$, $f$ is $\epsilon$-far from having Fourier degree at most $m-1$.
\end{proof}
With this setup, we can now prove \cref{thm:quantum_lower_bound}.
\begin{proof}[Proof of \cref{thm:quantum_lower_bound}]
    We prove this bound via a reduction to Simon's problem, and then leveraging the lower bound of Koiran \etal~in \cref{thm:Simons_lb}. Assume we are given a function $g$, which satisfies \cref{def:Simons}, and we are given a black box algorithm $\mathcal{A}$, a quantum tester for Fourier degree. Set $m=k+1$, and if needed, one can pad to $n$ variables by adding redundant variables (the tester can even be explicitly told which indices here are the redundant variables). First, uniformly at random sample a mask function $M:\mathbb{F}_2^m\rightarrow\{\pm 1\}$, and create the function $f(x)=M(g(x))$. If $g$ is periodic, $f$ is also periodic, and hence $f$ has Fourier dimension at most $k$. Otherwise, by \cref{lem:random_fourier_dimension}, it is $\epsilon$-far from having Fourier dimension $k$ with high probability. A phase query to $f$ can be simulated using $O(1)$ queries to $g$, first by querying $g$ in superposition, then applying the transformation $M$ to the resulting state in superposition, and then uncomputing $g$, if needed. Now, we can solve the Simon's decision problem as follows. Run $\mathcal{A}$ on $f$, and return periodic whenever $\mathcal{A}$ determines $f$ has Fourier dimension at most $k$, otherwise return injective. Repeat for a constant amount of repetitions, over different choices of mask $M$, and take a majority vote. This suffices as an algorithm for Simon's problem. By \cref{thm:Simons_lb}, then $\mathcal{A}$ requires at leastg $\Omega(k)$ queries.
\end{proof}
\section{An (Almost) Tight Classical Tester for Fourier Dimensionality}
The purpose of this section is to give a classical tester for Fourier dimensionality that is tight with respect to the lower bound of Gopalan \etal, up to lower order factors \cite{DBLP:journals/siamcomp/GopalanOSSW11}. We present the algorithm below.
\begin{algorithm}[H]
    \caption{A Classical Tester for Fourier Dimensionality}\label{alg:classical_fourier_dim_tester}
    \begin{algorithmic}[1]
    \State Set $T=O(1)$, $m=O(k/\epsilon)$, $\ell=O(2^{k/2})$, $\lambda={\ell\choose 2}/2^k$ and $\tau=(1-e^{-\lambda}+0.51\lambda)/2$.
    \For {$i$ in $1,\dots,T$}
        \State Uniformly at random sample $m$ shifts $h_1,\dots,h_m$ and $\ell$ points $x_1,\dots,x_\ell$.
        \State For each $x_i$, construct the tuple $\Phi(x_i)=(f(x_i+h_1),\dots f(x_i+h_m))$.
        \State Record ``A'' if there exists $i\neq j$ such that $\Phi(x_i)=\Phi(x_j)$.
    \EndFor
    \State Accept if the number of recorded ``A'' exceeds $\tau\cdot T$.
    \end{algorithmic}
\end{algorithm}
To explain the intuition behind this algorithm a bit more, the starting point is the following consequence of \cref{lem:constant_cosets}.
\begin{corollary}\label{cor:constant_tuples}
    Let $x,y$ belong to the same coset defined by the function $f$. Then, $\Phi(x)=\Phi(y)$, for all randomly $h_1,\dots,h_m$.
\end{corollary}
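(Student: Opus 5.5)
The plan is to first make precise what ``the same coset defined by $f$'' means, and then reduce the statement directly to \cref{lem:constant_cosets}. Let $U=\vspan(\supp\hat f)$. By definition $\supp(\hat f)\subseteq U$, so $f\in C_U$. The cosets in question are the cosets of $U^\perp$, that is, the sets $g+U^\perp$ described in \cref{ss:group}. Saying that $x$ and $y$ lie in the same coset then means exactly that $x+y\in U^\perp$, so I will write $y=x+u$ for some $u\in U^\perp$.

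With this in place, I would proceed coordinate by coordinate through the tuple $\Phi$. Fix any shift $h_j$. Then
\begin{equation*}
y+h_j=(x+h_j)+u, \qquad u\in U^\perp .
\end{equation*}
Applying the forward direction of \cref{lem:constant_cosets} to the point $x+h_j$ gives $f(x+h_j+u)=f(x+h_j)$. In words, $f(y+h_j)=f(x+h_j)$. Since $j$ was arbitrary, every coordinate of $\Phi(y)$ agrees with the corresponding coordinate of $\Phi(x)$, so $\Phi(x)=\Phi(y)$.

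The argument never uses the distribution of the shifts. The identity therefore holds for every choice of $h_1,\dots,h_m$, and hence deterministically for random ones, which is what the statement claims.

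I do not expect any real obstacle. The only point needing care is the interpretive one: ``coset defined by $f$'' must be read as a coset of $U^\perp$ with $U$ the span of the Fourier support, and not of $U$ itself. One should also note that translating by $h_j$ preserves membership in a common coset, since $(x+h_j)+(y+h_j)=x+y$. Once these two points are stated, the result is an immediate application of \cref{lem:constant_cosets}.
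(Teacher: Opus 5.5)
Your proof is correct and is the argument the paper intends. The paper treats the corollary as an immediate consequence of \cref{lem:constant_cosets}: points in a common coset of $U^\perp$, with $U=\vspan(\supp\hat f)$, have equal $f$-values, and shifting both by the same $h_j$ keeps them in a common coset. You have written out that one-line reasoning explicitly, including the observation that the identity holds for every choice of shifts, not only random ones.
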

This is simply a consequence of strings in the same coset evaluating to the same output. Then, the idea is that, if $f$ has Fourier dimension at most $k$, then there are at most $2^k$ cosets. On the other hand, if $f$ is far, and thus has Fourier dimension at least $k+1$, it will have at least $2^{k+1}$ cosets. We then observe how many collisions are observed in the above algorithm, where a collision is two tuples $\Phi(x_i)=\Phi(x_j)$ being completely equal. When $f$ has low Fourier dimension, we should expect to observe more collisions than when $f$ has high Fourier dimension. We quantify this gap in two lemmas.
\begin{lemma}\label{lem:yes_instance_collision}
    Assume $\fdim{f}\leq k$. Then, in one execution of Lines 3-5 in \cref{alg:classical_fourier_dim_tester}, the probability of recording ``A'' is at least
    \begin{equation*}
        \Pr[\text{Recording A}]\geq 1-e^{-\lambda}.
    \end{equation*}
\end{lemma}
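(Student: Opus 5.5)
The plan is to reduce the event of recording ``A'' to a birthday-paradox collision among the cosets of $U^\perp$, where $U=\vspan(\supp\hat f)$. Since $\fdim(f)\leq k$, we have $\dim U\leq k$. Hence $U^\perp$ has dimension at least $n-k$, and it partitions $\mathbb{F}_2^n$ into $N=2^{\dim U}\leq 2^k$ cosets, all of equal size $|U^\perp|$.

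First I would verify \cref{cor:constant_tuples} concretely. Suppose $x_i$ and $x_j$ lie in the same coset, so that $x_i+x_j\in U^\perp$. Then for every shift $h_t$ we have $(x_i+h_t)+(x_j+h_t)=x_i+x_j\in U^\perp$. Because $f\in C_U$, \cref{lem:constant_cosets} gives $f(x_i+h_t)=f(x_j+h_t)$ for each $t$, and therefore $\Phi(x_i)=\Phi(x_j)$. This holds deterministically, whatever the shifts $h_1,\dots,h_m$ are. Consequently, the event ``two of the $\ell$ sampled points lie in the same coset of $U^\perp$'' is contained in the event of recording ``A''. It therefore suffices to lower bound the probability of this coset collision.

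Next, because each $x_i$ is uniform on $\mathbb{F}_2^n$ and the cosets have equal size, the coset containing $x_i$ is uniform over the $N$ cosets. These coset labels are independent across $i$. The probability that all $\ell$ labels are distinct is
\begin{equation*}
\prod_{i=1}^{\ell-1}\left(1-\frac{i}{N}\right)\leq \exp\left(-\sum_{i=1}^{\ell-1}\frac{i}{N}\right)=\exp\left(-\binom{\ell}{2}\Big/N\right)\leq \exp\left(-\binom{\ell}{2}\Big/2^k\right)=e^{-\lambda}.
\end{equation*}
The first inequality uses $1-t\leq e^{-t}$. If $\ell>N$, some factor in the product is zero, so the bound holds trivially. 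Taking complements gives $\Pr[\text{Recording A}]\geq 1-e^{-\lambda}$.

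The only real subtlety is the step showing that same-coset points produce identical fingerprints for arbitrary shifts. This is what lets the shifts be ignored entirely in the YES case, and it follows directly from \cref{lem:constant_cosets} as above. A second point to handle is that $N$ may be strictly smaller than $2^k$. This is covered because the no-collision bound is monotone in $N$: fewer cosets only make a collision more likely.
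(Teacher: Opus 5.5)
Your proof is correct and follows the paper's argument. Points in the same coset of $U^\perp$ have identical fingerprints whatever the shifts are, and a birthday bound over at most $2^k$ equally likely cosets then gives $\Pr[\text{no collision}]\leq e^{-\lambda}$; you are in fact slightly more careful than the paper, since you handle $N<2^k$ by monotonicity and treat the case $\ell>N$ explicitly instead of using the paper's informal ``assume exactly $2^k$ cosets'' reduction.
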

\begin{lemma}\label{lem:no_instance_collision}
    Assume $f$ is $\epsilon$-far from having Fourier dimension $k$. Then, in one execution of Lines 3-5 in \cref{alg:classical_fourier_dim_tester}, the probability of recording ``A'' is at most
    \begin{equation*}
        \Pr[\text{Recording A}]\leq 0.51\lambda.
    \end{equation*}
\end{lemma}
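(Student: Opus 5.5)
The plan is to bound the probability of recording ``A'' by a union bound over pairs, and then to reduce the collision probability of a single pair to a statement about Fourier samples, where \cref{cor:fourier_mass} applies. First, by the union bound,
\begin{equation*}
\Pr[\text{Recording A}]\leq {\ell\choose 2}\Pr[\Phi(x)=\Phi(y)],
\end{equation*}
where $x,y$ are independent uniform points and $h_1,\dots,h_m$ are independent uniform shifts. It therefore suffices to show $\Pr[\Phi(x)=\Phi(y)]\leq 0.51\cdot 2^{-k}$, since then the bound is $0.51\lambda$.

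\textbf{Reducing one pair to Fourier samples.} Set $z=x+y$, which is uniform. Condition on $x$ and $z$. Then the points $w_t=x+h_t$ are i.i.d.\ uniform, and $\Phi(x)=\Phi(y)$ holds exactly when $f(w_t)=f(w_t+z)$ for every $t$. Hence
\begin{equation*}
\Pr[\Phi(x)=\Phi(y)]=\mathbb{E}_z[p(z)^m],\qquad p(z)=\Pr_w[f(w)=f(w+z)].
\end{equation*}
Expanding $f$ in the Fourier basis gives $\mathbb{E}_w[f(w)f(w+z)]=\sum_S\hat f(S)^2\chi_S(z)$. Consequently
\begin{equation*}
p(z)=\frac{1+\sum_S\hat f(S)^2\chi_S(z)}{2}=\Pr_{S\sim \hat f^2}[\langle S,z\rangle=0],
\end{equation*}
where $S$ is drawn with probability $\hat f(S)^2$ (a distribution by \cref{thm:parseval}). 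Thus $p(z)^m$ is the probability that $m$ i.i.d.\ Fourier samples $S_1,\dots,S_m$ are all orthogonal to $z$. Swapping the order of expectation,
\begin{equation*}
\mathbb{E}_z[p(z)^m]=\mathbb{E}_{S_1,\dots,S_m}\left[\Pr_z[z\in W^\perp]\right]=\mathbb{E}\left[2^{-\dim W}\right],\qquad W=\vspan(S_1,\dots,S_m).
\end{equation*}

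\textbf{Bounding $\mathbb{E}[2^{-\dim W}]$.} Reveal the samples one at a time and let $W_t=\vspan(S_1,\dots,S_t)$. Whenever $\dim W_t\leq k$, \cref{cor:fourier_mass} says the mass outside $W_t$ is at least $2\epsilon$. So the next sample raises the dimension with probability at least $2\epsilon$, regardless of the past. By a standard coupling, $\Pr[\dim W\leq k]\leq\Pr[\bin(m,2\epsilon)\leq k]$. With $m=Ck/\epsilon$ the mean is $2Ck$, and a Chernoff bound gives a tail of $e^{-\Omega(Ck)}$. For a large enough constant $C$ this is at most $0.01\cdot 2^{-k}$. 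Therefore
\begin{equation*}
\mathbb{E}[2^{-\dim W}]\leq 2^{-(k+1)}+0.01\cdot 2^{-k}\cdot 1=0.51\cdot 2^{-k},
\end{equation*}
which completes the bound.

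\textbf{Main obstacle.} The delicate steps are the conditioning argument and the tail estimate. The points $x+h_t$ share the common point $x$, so one must check they are still conditionally i.i.d.\ given $x$ and $z$; they are, because the shifts $h_t$ are fresh and independent. Separately, the failure probability must be exponentially small in $k$, namely below $0.01\cdot 2^{-k}$, and not merely a small constant. This requirement is precisely what forces $m=\Theta(k/\epsilon)$ rather than $\Theta(1/\epsilon)$.
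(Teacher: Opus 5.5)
Your proposal is correct and follows essentially the same route as the paper: a union bound over the $\binom{\ell}{2}$ pairs, the reduction $\Pr[\Phi(x)=\Phi(y)]=\mathbb{E}[2^{-\dim \vspan(S_1,\dots,S_m)}]$ (which you rederive inline; the paper gets it from \cref{lem:fixed_fourier_collision}, \cref{lem:fixed_tuple_fourier_prob} and \cref{cor:tuple_fourier_collisions}), and binomial domination via \cref{cor:fourier_mass} with $m=\Theta(k/\epsilon)$. The only difference is the tail step: the paper uses $2^{-\dim W}\le 2^{-k-1}+2^{-B}$ and computes $\mathbb{E}[2^{-B}]=(1-\epsilon)^m\le e^{-\epsilon m}$ directly, whereas you bound $\Pr[\bin(m,2\epsilon)\le k]$ by Chernoff and use $2^{-\dim W}\le 1$ on that event; both give the $0.51\cdot 2^{-k}$ bound.
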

Assuming the truth of the above two lemmas for now, we aim to prove the following result:
\begin{theorem}\label{thm:classical_tester}
    Let $\epsilon>0$ be fixed. Then \cref{alg:classical_fourier_dim_tester} uses $O\left(\frac{2^{k/2}k}{\epsilon}\right)=\tilde{O}(2^{k/2}/\epsilon)$ queries, and distinguishes between $f$ having Fourier dimension at most $k$, and $f$ begin $\epsilon$-far from having Fourier dimension $k$ with probability at least $2/3$.
\end{theorem}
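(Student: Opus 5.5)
We take \cref{lem:yes_instance_collision} and \cref{lem:no_instance_collision} as given and prove \cref{thm:classical_tester} in two parts: a query count, and a concentration argument over the $T$ independent rounds.

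\emph{Query count.} Each round queries $f$ at the $\ell m$ points $x_i+h_j$. With $\ell=O(2^{k/2})$, $m=O(k/\epsilon)$ and $T=O(1)$, the total is $O(T\ell m)=O(2^{k/2}k/\epsilon)$, which is $\tilde O(2^{k/2}/\epsilon)$.

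\emph{Choice of $\lambda$.} We need a constant separation between the yes and no bounds on the per-round acceptance probability. Since $\lambda=\binom{\ell}{2}/2^k$, we may choose the constant in $\ell=O(2^{k/2})$ so that $\lambda$ equals a fixed constant $\lambda_0$, say $\lambda_0=1$, up to rounding. At that value,
\[
1-e^{-\lambda_0}\approx 0.632 > 0.51\lambda_0=0.51 .
\]
More generally, $1-e^{-\lambda}>0.51\lambda$ holds for all sufficiently small positive $\lambda$, because $1-e^{-\lambda}=\lambda-\lambda^2/2+\dots$. So we fix any constant $\lambda_0$ in this range and set
\[
\gamma=(1-e^{-\lambda_0})-0.51\lambda_0>0 .
\]
The threshold $\tau$ is the midpoint of the two bounds, so it sits at distance $\gamma/2$ from each.

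\emph{Concentration over rounds.} The rounds use fresh randomness, so the indicators of recording ``A'' are i.i.d.\ Bernoulli$(p)$.
\begin{itemize}
\item If $\fdim(f)\le k$, then $p\ge 1-e^{-\lambda}=\tau+\gamma/2$ by \cref{lem:yes_instance_collision}.
\item If $f$ is $\epsilon$-far, then $p\le 0.51\lambda=\tau-\gamma/2$ by \cref{lem:no_instance_collision}.
\end{itemize}
By Hoeffding's inequality, the empirical fraction of ``A'' rounds deviates from $p$ by at least $\gamma/2$ with probability at most $2\exp(-T\gamma^2/2)$. Taking $T\ge 2\ln 6/\gamma^2$, which is $O(1)$ since $\gamma$ is a constant, makes this at most $1/3$. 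The accept-iff-count-exceeds-$\tau T$ rule is therefore correct with probability at least $2/3$ in both cases.

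\emph{Main obstacle.} The delicate step is the parameter compatibility. The no-case lemma is presumably proved by a union bound over pairs, with $\Pr[\Phi(x_i)=\Phi(x_j)]\approx 0.51\cdot 2^{-k}$. That bound only beats the yes-case bound $1-e^{-\lambda}$ when $\lambda$ is a small enough constant. The main check is therefore that the hidden constants in $\ell$ and $m$ used to prove the two lemmas are consistent with such a choice of $\lambda$. If they are not, the fallback is to shrink $\lambda$ and increase $T$ to $O(1/\gamma^2)$, which is still a constant and keeps the query bound unchanged.
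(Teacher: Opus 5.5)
Your proof is correct and follows the paper's argument: choose the constant in $\ell$ so that $\lambda$ is a constant with $1-e^{-\lambda}>0.51\lambda$ (the paper takes $\lambda\in[1/2,1]$, giving a gap of at least $0.1$), put $\tau$ at the midpoint, apply Hoeffding over $T=O(1)$ independent rounds, and count $T\ell m$ queries. The compatibility worry in your last paragraph is unnecessary: both lemmas hold for arbitrary $\ell$, since the no-case bound $0.51\lambda$ comes from a union bound whose per-pair estimate depends only on $m=O(k/\epsilon)$, so any constant $\lambda$ in the good range works.
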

\begin{proof}
    By adjusting the constant inside $\ell$, we can choose $\lambda$ such that $\lambda\in[1/2,1]$. Then, notice in this case,
    \begin{equation*}
        1-e^{-\lambda}-0.51\lambda\geq 0.1,
    \end{equation*}
    that is, there is a constant sized gap. Choose $\tau=\frac{ 1-e^{-\lambda}-0.51}{2}$. Now, assume we repeat $T$ repetitions of Lines 3-5, and round $i$ outputs $X_i=1$ if that round records A, and otherwise outputs $X_i=0$. Let
    \begin{equation*}
        \bar{X}=\frac{1}{T}\sum_{i=1}^TX_i,~~\mu=\mathbb{E}[\bar{X}].
    \end{equation*}
    In other words, our algorithm will accept if $\bar{X}\geq \tau$. By Hoeffding's inequality, we have that
    \begin{equation*}
        \Pr[|\bar{x}-\mu|\geq \gamma]\leq 2\exp(-2T\gamma^2).
    \end{equation*}
    Then, the probability the algorithm errors is at most the probability that $|\bar{X}-\mu|\geq \Delta:=\frac{1-e^{-\lambda}-0.51}{2}$, which is a constant. Plugging this into Hoeffding's inequality, we get that
    \begin{align*}
        \Pr[|\bar{x}-\mu|\geq \Delta]&\leq 2\exp(-2T\Delta^2)\\
        &\leq 2\exp\left(-\frac{T}{2}(1-e^{-\lambda}-0.51\lambda)\right)\\
        &\leq \frac{1}{3},
    \end{align*}
    where the last inequality is obtained by adjusting the constant inside $T$. Thus, the algorithm errors with probability at most $1/3$. The final thing to count is the number of samples used: The algorithm uses $m\ell$ samples in each round, for a total of $T$ rounds. Hence, the total number of samples to $f$ given by
    \begin{equation*}
        Tm\ell=O(1)O(k/\epsilon)O(2^{k/2})=O\left(\frac{k2^{k/2}}{\epsilon}\right),
    \end{equation*}
    as required.
\end{proof}
\subsection{Proof of Two Lemmas}
We conclude with the proofs of \cref{lem:yes_instance_collision} and \cref{lem:no_instance_collision}. We start with the simpler one, the former. This one relies only on a simple balls and bins argument.
\begin{proof}[Proof of \cref{lem:yes_instance_collision}]
    Assume $\fdim(f)\leq k$, and $U=\vspan(\hat{f})$. This implies $\dim(U)\leq k$. By \cref{cor:constant_tuples}, if two points $x_i$ and $x_j$ are in the same coset, then $f(x_i+h)=f(x_j+h)$ for all $h$, including all the sampled shifts. In other words, $\Phi(x_i)=\Phi(x_j)$. By assumption, there are at most $2^k$ cosets. For simplicity, assume there are exactly $2^k$ cosets, and $\Phi(x)$ and $\Phi(y)$ are different if $x$ and $y$ are in different cosets. This assumption will only decrease the probability of collision, and thus not recording ``A''. We then get that 
    \begin{align*}
        \Pr[\text{Not Recording A}]&\leq \prod_{i=0}^{\ell-1}\left(1-\frac{i}{2^k}\right)\\
        &\leq \exp\left(-\frac{{\ell\choose 2}}{2^k}\right)\\
        &=e^{-\lambda}.
    \end{align*}
    Therefore, the probability of seeing a collision, and thus recording ``A'', is at least $1-e^{-\lambda}$.
\end{proof}
We now assume $f$ is far from having Fourier dimension $k$, meaning $\fdim(f)\geq k+1$, and aim to show that collisions are less abundant in this regime. The main intermediate point we want to build up to is the following lemma, which we will use in the proof of \cref{lem:no_instance_collision}. Interestingly, this relates the probability of the tuples being equal to a probability related to Fourier sampling.
\begin{lemma}\label{lem:fixed_tuple_fourier_prob}
    Let $x$ and $y$ be fixed, and $h=x+y$. Define the Fourier distribution $\mu_f$, where $\mu_f(a)=\hat{f}(a)^2$. Then,
    \begin{equation*}
        \Pr_{h_1,\dots,h_m}[\Phi(x)=\Phi(y)]=\Pr_{S_1,\dots,S_m\sim\mu_f}[\langle S_i,h\rangle=0,~\text{for all }i\in[m]],
    \end{equation*}
    where $h_i$ are the randomly sampled shifts defined in \cref{alg:classical_fourier_dim_tester}.
\end{lemma}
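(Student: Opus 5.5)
The plan is to use independence of the shifts to reduce both sides to the $m$-th power of a single-shift quantity, and then to identify the two single-shift quantities through a Fourier computation of the autocorrelation of $f$.

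\emph{Step 1 (reduce the left side).} The shifts $h_1,\dots,h_m$ are independent and uniform, and $\Phi(x)=\Phi(y)$ means $f(x+h_i)=f(y+h_i)$ for every $i$. So the left-hand side equals $p^m$, where $p=\Pr_{h'}[f(x+h')=f(y+h')]$ for a single uniform $h'$. Substituting $z=x+h'$, which is again uniform, gives $y+h'=z+h$ with $h=x+y$. Hence
\begin{equation*}
p=\Pr_z[f(z)=f(z+h)].
\end{equation*}

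\emph{Step 2 (reduce the right side).} The $S_i$ are drawn independently from $\mu_f$, which is a probability distribution by \cref{thm:parseval}. So the right-hand side equals $q^m$, where
\begin{equation*}
q=\sum_{S:\langle S,h\rangle=0}\hat{f}(S)^2.
\end{equation*}
It therefore suffices to prove $p=q$.

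\emph{Step 3 (the key identity).} Since $f$ is $\pm1$-valued, $\mathbbm{1}[f(z)=f(z+h)]=\tfrac{1}{2}(1+f(z)f(z+h))$, so $p=\tfrac12\bigl(1+\mathbb{E}_z[f(z)f(z+h)]\bigr)$. Expanding $f(z+h)=\sum_S\hat f(S)\chi_S(z)\chi_S(h)$, as in the proof of \cref{lem:perp_averaging}, and using orthonormality of the characters gives
\begin{equation*}
\mathbb{E}_z[f(z)f(z+h)]=\sum_S\hat f(S)^2\chi_S(h).
\end{equation*}
The terms with $\langle S,h\rangle=0$ contribute $+\hat f(S)^2$ and the others contribute $-\hat f(S)^2$, so this sum equals $q-(1-q)=2q-1$ by Parseval. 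Therefore $p=\tfrac12(1+2q-1)=q$, and raising both sides to the $m$-th power gives the lemma.

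The only step needing care is the orthonormality expansion in Step 3, which is routine. One small point to check is that the change of variables in Step 1 is legitimate, i.e.\ that $x+h'$ is uniform when $x$ is fixed; it is, since translation is a bijection of $\mathbb{F}_2^n$.
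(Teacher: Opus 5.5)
Your proof is correct and follows essentially the same route as the paper. The paper first proves your Step 3, including the same change of variables, as a separate single-shift lemma (\cref{lem:fixed_fourier_collision}), giving $\Pr_z[f(x+z)=f(y+z)]=\sum_{S:\langle S,h\rangle=0}\hat f(S)^2$; it then uses independence of the shifts $h_1,\dots,h_m$ to take the $m$-fold product, exactly as you do.
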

In essence this allows us to relate the probability that $\Phi(x)=\Phi(y)$ for fixed $x$ and $y$ to a sampling problem in the Fourier space. Namely, the probability of collision is equivalent to the probability that after $m$ samples from the Fourier spectrum of $f$, we get a subspace that is entirely orthogonal to $x+y$. To build up to this, we first prove a smaller result for a singular shift.
\begin{lemma}\label{lem:fixed_fourier_collision}
    Let $h=x+y$, for $x,y\in\mathbb{F}_2^n$ fixed. Then, for uniformly sampled $z\sim\mathbb{F}_2^n$,
    \begin{equation*}
        \Pr_z[f(x+z)=f(y+z)]=\sum_{S:\langle S,h\rangle=0}\hat{f}(S)^2.
    \end{equation*}
\end{lemma}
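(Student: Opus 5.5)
Since $f$ takes values in $\{\pm1\}$, I would encode the equality event algebraically:
\begin{equation*}
    \mathbbm{1}[f(x+z)=f(y+z)]=\frac{1+f(x+z)f(y+z)}{2}.
\end{equation*}
Taking expectations over uniform $z$ gives
\begin{equation*}
    \Pr_z[f(x+z)=f(y+z)]=\tfrac12\bigl(1+\mathbb{E}_z[f(x+z)f(y+z)]\bigr).
\end{equation*}
This reduces the task to computing the correlation $\mathbb{E}_z[f(x+z)f(y+z)]$. Substituting $w=x+z$, which is again uniform, and using $y+z=w+h$, this becomes the autocorrelation $\mathbb{E}_w[f(w)f(w+h)]$.

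Next I would expand both factors in the Fourier basis, as in the proof of \cref{lem:perp_averaging}. This gives
\begin{equation*}
    \mathbb{E}_w[f(w)f(w+h)]=\sum_{S,T}\hat f(S)\hat f(T)\,\chi_T(h)\,\mathbb{E}_w[\chi_S(w)\chi_T(w)].
\end{equation*}
Orthogonality of characters kills every term with $S\neq T$, so the sum collapses to $\sum_S \hat f(S)^2\chi_S(h)$.

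Now split this sum according to whether $\langle S,h\rangle=0$. Write $A=\sum_{S:\langle S,h\rangle=0}\hat f(S)^2$. By Parseval (\cref{thm:parseval}), the remaining terms sum to $1-A$. Hence the autocorrelation equals $A-(1-A)=2A-1$. Plugging back in gives $\Pr_z[f(x+z)=f(y+z)]=\tfrac12(1+2A-1)=A$, which is the claim.

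There is no real obstacle here. The only points needing care are the change of variables $w=x+z$, which keeps the distribution uniform, and the fact that $\chi_S(h)=\pm1$ according to $\langle S,h\rangle$ mod $2$.
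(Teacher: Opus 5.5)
Your proposal is correct and follows essentially the same route as the paper: the indicator identity $\frac{1+f(u)f(u+h)}{2}$, the change of variables $u=x+z$, the Fourier expansion of the autocorrelation to $\sum_S \hat f(S)^2\chi_S(h)$, and Parseval (\cref{thm:parseval}). The only cosmetic difference is at the end: you split the sum into $A$ and $1-A$, whereas the paper folds the $1$ in as $\sum_S \hat f(S)^2$ and observes that $1+(-1)^{\langle S,h\rangle}$ vanishes when $\langle S,h\rangle=1$.
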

\begin{proof}
    Consider $u=x+z$. Since $z$ is uniformly sampled, so is $u$. Then, notice that $y+z=x+z+h=u+h$, recalling that $h=x+y$. Thus
    \begin{equation*}
        \Pr_z[f(x+z)=f(y+z)]=\Pr_u[f(u)=f(u+h)].
    \end{equation*}
    Using that $f\in\{\pm1\}$, we get that
    \begin{equation*}
        \mathbbm{1}[f(u)=f(u+h)=\frac{1+f(u)f(u+h)}{2}.
    \end{equation*}
    Noting that this right hand side is bounded between zero and one, we get that 
    \begin{align*}
        \Pr_u[f(u)=f(u+h)]&=\mathbb{E}\left[\frac{1+f(u)f(u+h)}{2}\right]\\
        &=\frac{1}{2}(1+\mathbb{E}_u[f(u)f(u+h)]).
    \end{align*}
    Taking Fourier expansions of both $f(u)$ and $f(u+h)$, we get that 
    \begin{equation*}
        \mathbb{E}_u[f(u)f(u+h)]=\sum_{S}\hat{f}(S)^2\chi_S(h),
    \end{equation*}
    which means that 
    \begin{align*}
        \Pr_u[f(u)=f(u+h)]&=\frac{1}{2}(1+\sum_S\hat{f}(S)^2(-1)^{\langle S, h\rangle})\\
        &=\frac{1}{2}\sum_S\hat{f}(S)^2(1+(-1)^{\langle S, h\rangle}),
    \end{align*}
    where in the second line we used \cref{thm:parseval}. Now notice that $1+(-1)^{\langle S,h\rangle}=0$ if $\langle S,h\rangle=1$. Thus, it is only nonzero when $\langle S,h\rangle=0$, in which case it equals two, proving the result.
\end{proof}
Using this, we can now prove \cref{lem:fixed_tuple_fourier_prob}.
\begin{proof}[Proof of \cref{lem:fixed_tuple_fourier_prob}]
    Using \cref{lem:fixed_fourier_collision}, for any particular $h_i$, we have that 
    \begin{equation*}
        \Pr_{h_i}[f(x+h_i)=f(y+h_i)]=\sum_{\langle S,h\rangle=0}\hat{f}(S)^2=\Pr_{S\sim\mu_f}[\langle S,h\rangle=0].
    \end{equation*}
    Since all $h_i$'s are sampled independently, we have that, by definition of the tuples being equal,
    \begin{align*}
        \Pr_{h_1,\dots, h_m}[\Phi(x)=\Phi(y)]&=\prod_{i=1}^m\Pr_{S_i\sim \mu_f}[\langle S_i,h\rangle=0]\\
        &=\Pr_{S_1,\dots,S_m\sim\mu_f}[\langle S_i,h\rangle=0,~\text{for all }i\in[m]].
    \end{align*}
\end{proof}
We then drop the assumption on $x$ and $y$ being fixed, and instead randomly sampled, as needed in \cref{alg:classical_fourier_dim_tester}.
\begin{corollary}\label{cor:tuple_fourier_collisions}
 Sample $x,y,h-1,\dots,h_m$ uniformly at random.
    \begin{equation*}
        \Pr_{x,y,h_1,\dots,h_m}[\Phi(x)=\Phi(y)]=\mathbb{E}_{S_1,\dots,S_m\sim\mu_f}[2^{-\dim(\vspan(S_1,\dots,S_m))}].
    \end{equation*}
\end{corollary}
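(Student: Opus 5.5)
The plan is to average \cref{lem:fixed_tuple_fourier_prob} over $x,y$, and then swap the order of the two randomnesses. First condition on $x$ and $y$. By \cref{lem:fixed_tuple_fourier_prob}, the probability over the shifts that $\Phi(x)=\Phi(y)$ equals $\Pr_{S_1,\dots,S_m\sim\mu_f}[\langle S_i,x+y\rangle=0~\forall i]$. Averaging over independent uniform $x,y$ gives
\begin{equation*}
    \Pr_{x,y,h_1,\dots,h_m}[\Phi(x)=\Phi(y)]=\mathbb{E}_{x,y}\,\Pr_{S_1,\dots,S_m\sim\mu_f}[\langle S_i,x+y\rangle=0~\forall i].
\end{equation*}
The key observation is that $h=x+y$ is itself uniform on $\mathbb{F}_2^n$ when $x,y$ are independent and uniform. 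Indeed, for fixed $x$ the map $y\mapsto x+y$ is a bijection. So the right-hand side becomes $\mathbb{E}_{h}\Pr_{S}[\cdots]$ with $h$ uniform.

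Next I apply Fubini (Tonelli for this nonnegative finite sum) to exchange the expectation over $h$ with the one over $S_1,\dots,S_m$. This writes the quantity as
\begin{equation*}
    \mathbb{E}_{S_1,\dots,S_m\sim\mu_f}\Pr_{h}[\langle S_i,h\rangle=0~\forall i].
\end{equation*}
Now $S_1,\dots,S_m$ are fixed. The condition $\langle S_i,h\rangle=0$ for all $i$ is equivalent to $h\in V^\perp$ with $V=\vspan(S_1,\dots,S_m)$, by linearity of the inner product. Since $\dim V^\perp=n-\dim V$, we have $|V^\perp|=2^{n-\dim V}$. Hence $\Pr_h[h\in V^\perp]=2^{-\dim V}$. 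Substituting this back yields exactly $\mathbb{E}_{S_1,\dots,S_m\sim\mu_f}[2^{-\dim(\vspan(S_1,\dots,S_m))}]$.

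There is no real obstacle here. The only points to check are that $x+y$ is uniform, which also covers the case $x=y$ (then $h=0$ and the collision is automatic, consistent with the formula), and the dimension count for the orthogonal complement over $\mathbb{F}_2$. The latter holds because $V^\perp$ is the kernel of the linear map $h\mapsto(\langle b_j,h\rangle)_j$ for a basis $(b_j)$ of $V$. That map has full rank $\dim V$, so its kernel has dimension $n-\dim V$.
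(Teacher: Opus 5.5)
Your proof is correct and follows essentially the same route as the paper. Both use that $h=x+y$ is uniform, identify the event with $h\in\vspan(S_1,\dots,S_m)^\perp$, which has probability $2^{-\dim}$, and then average over the Fourier samples. You state the exchange of the expectations over $h$ and $S_1,\dots,S_m$ explicitly, whereas the paper leaves that step implicit.
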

\begin{proof}
    Let $K=\vspan(S_1,\dots,S_m)$. Since $x$ and $y$ are u.a.r, so is $h=x+y$. Then, $\langle S_i,h\rangle=0$ for all $i$ if and only if $h\in K^\perp$. Assume $\dim(K)=d$, which implies $\dim(K^\perp)=n-d$, and $|K^\perp|=2^{n-d}$. Then,
    \begin{equation*}
        \Pr_h[h\in K^\perp]=\frac{2^{n-d}}{2^n}=2^{-d}=2^{-\dim(K)}.
    \end{equation*}
    Now, we take the expectation of the dimension of $K$ over uniformly sampled $S_i$'s, which gives
    \begin{equation*}
        \Pr_{x,y,h_1,\dots,h_m}[\Phi(x)=\Phi(y)]=\mathbb{E}_{S_1,\dots,S_m\sim\mu_f}[2^{-\dim(\vspan(S_1,\dots,S_m))}].
    \end{equation*}
\end{proof}
This is all we need to prove \cref{lem:no_instance_collision}.
\begin{proof}[Proof of \cref{lem:no_instance_collision}]
    Consider any pair $x_i$ and $x_j$. By \cref{cor:tuple_fourier_collisions}, analyzing the probability that $\Pr[\Phi(x_i)=\Phi(x_j)]$ is equivalent to analyzing the expected dimension of subspace spanned by $m$ Fourier samples according to $f$. Let this sequence of Fourier samples be $S_1,\dots,S_m$. We consider the size of the dimension of the spans as we process the Fourier samples. Let $K_t=\vspan(S_1,\dots,S_t)$, and suppose $\dim(K_t)\leq k$. Then, by \cref{cor:fourier_mass}, 
    \begin{equation*}
        \Pr_{S_{t+1}}[S_{t+1}\not\in K_t]\geq 2\epsilon,
    \end{equation*}
    and if this occurs, this implies $\dim(K_{t+1})=1+\dim(K_t)$. Hence, until we have a sample $t$ such that $\dim(K_t)=k+1$, every new sample increases the dimension of the Fourier subspace with probability at least $2\epsilon$. We model this with a binomial distribution. Namely, let $B\sim\bin(m,2\epsilon)$, and let $d=\dim\vspan(S_1,\dots,S_m))$. Note by definition, $B$ lowerbounds $d$. Moreover, if $B\leq k$, then $B\leq d$ implies that $2^{-d}\leq 2^{-B}$. Else, if $B\geq k+1$, then $d\geq k+1$  and thus $2^{-d}\leq 2^{-k-1}$. In other words, we can conclude that 
    \begin{equation*}
        2^{-d}\leq 2^{-k-1}+2^{-B}.
    \end{equation*}
    Hence, by \cref{cor:tuple_fourier_collisions},
    \begin{align*}
        \Pr[\Phi(x)=\Phi(y)]&=\mathbb{E}[2^{-d}]\\
        &\leq 2^{-k-1}+\mathbb{E}[2^{-B}]\\
        &\leq 2^{-k-1}+e^{-\epsilon m}.
    \end{align*}
    Choosing $m=O(k/\epsilon)$, and the constant appropriately, we get that 
    \begin{equation*}
        \Pr[\Phi(x)=\Phi(y)]\leq \frac{0.51}{2^k}.
    \end{equation*}
    Then, taking a union bound over all $\ell\choose 2$ pairs, we get that
    \begin{equation*}
        \Pr[\text{at least one collision}]\leq {\ell\choose 2}\frac{0.51}{2^k}=0.51\lambda.
    \end{equation*}
\end{proof}
\paragraph{Acknowledgements. }We would like to thank Cl\'{e}ment Canonne for pointing us to the work on linear $k$-juntas by De \etal~\cite{DBLP:conf/colt/DeMN19}. We would also like to thank them for discussions on this work. The author, after having obtained the $\Omega(k)$ lower bound result in \cref{thm:quantum_lower_bound}, used LLM assistance to potentially see if they could prove an $\Omega(k/\epsilon)$ bound, and get the right dependence on $\epsilon$. At this point, the author had an $O(k/\epsilon)$ quantum tester. With brief back and forth discussion, the LLM pointed out that instead of using naive Fourier sampling, one could use amplitude amplification to improve the dependence on $\epsilon$ to $\sqrt{\epsilon}$. The author acknowledges LLM contribution in this step, and takes full responsibility for its correctness.
\printbibliography
\end{document}